\documentclass[twocolumn]{autart}    

\usepackage{multirow}
\usepackage{amsmath,amssymb,mathtools}

\usepackage{tabularx}  
\usepackage{ragged2e}  
\usepackage{gensymb}
\usepackage{graphics}
\usepackage{epstopdf}
\usepackage{color}
\usepackage{latexsym}
\usepackage{float}
\usepackage[title]{appendix}%
\usepackage{xcolor}%
\usepackage{textcomp}%
\usepackage{manyfoot}%
\usepackage{booktabs}%
\usepackage{algorithm}%
\usepackage{algorithmicx}%
\usepackage{algpseudocode}%
\usepackage{listings}%
\usepackage{newcmd}
\usepackage{caption}
\usepackage{graphicx}      
\usepackage{tabu}
\usepackage{bbm}
\usepackage{MnSymbol}
\usepackage{bm}
\usepackage{xfrac}
\usepackage{textcomp}
\usepackage{cite}
\usepackage{hyperref}
\hypersetup{urlcolor=green, colorlinks=true, linkcolor=cyan, citecolor=red, filecolor=black}
\definecolor{orange}{RGB}{255,127,0}

\usepackage{natbib}
\usepackage[margin=2mm]{subfig}

\usepackage{tikz}

\newtheorem{theorem}{Theorem}
\newtheorem{proposition}[theorem]{Proposition}%
\newtheorem{lemma}{Lemma}
\newtheorem{corollary}{Corollary}
\newtheorem{remark}{Remark}%

\newenvironment{proof}{\noindent\textbf{Proof.} }{\hfill$\blacksquare$\par}
\begin{document}
\begin{frontmatter}

\title{Finite-time Stable Pose Estimation on $\Ta\SE$ using Point Cloud and Velocity Sensors} 

\thanks[footnoteinfo]{This paper was not presented at any IFAC 
meeting. Corresponding author A.~K.~Sanyal. Tel. +1 315-443-0466.}

\author{Nazanin S. Hashkavaei}\ead{nsafaeih@syr.edu},
\author{Abhijit Dongare}\ead{abhijit.dongare2892@gmail.com},    
\author{Neon Srinivasu}\ead{neonsrin@syr.edu},
\author[footnoteinfo]{Amit K. Sanyal}\ead{aksanyal@syr.edu}  

\address{Department of Mechanical \& Aerospace Engineering, Syracuse University, Syracuse, NY 13244}        

\begin{keyword}                           
finite-time stability, nonlinear estimation, variational integrator, point cloud sensors, Lyapunov analysis, model-free estimation, Lie algebra, unmanned vehicle.               
\end{keyword}                             

\begin{abstract}                          
This work presents a finite-time stable pose estimator (FTS-PE) for rigid bodies undergoing rotational and translational motion in three dimensions, using measurements from onboard sensors that provide position vectors to inertially-fixed points and body velocities. The FTS-PE is a full-state observer for the pose (position and orientation) and velocities and is obtained through a Lyapunov analysis that shows its stability in finite time and its robustness to bounded measurement noise. Further, this observer is designed directly on the state space, the tangent bundle of the Lie group of rigid body motions, SE(3), without using local coordinates or (dual) quaternion representations. Therefore, it can estimate arbitrary rigid body motions without encountering singularities or the unwinding phenomenon and be readily applied to autonomous vehicles. A version of this observer that does not need translational velocity measurements and uses only point clouds and angular velocity measurements from rate gyros, is also obtained. It is discretized using the framework of geometric mechanics for numerical and experimental implementations. The numerical simulations compare the FTS-PE with a dual-quaternion extended Kalman filter and our previously developed variational pose estimator (VPE). The experimental results are obtained using point cloud images and rate gyro measurements obtained from a Zed 2i stereo depth camera sensor. These results validate the stability and robustness of the FTS-PE.
\end{abstract}

\end{frontmatter}

\section{Introduction}
 State estimation on the Lie group of rigid body motions is essential for aerial vehicles, spacecraft, and underwater vehicles. In the absence of global navigation satellite systems (GNSS) in these applications, onboard sensors that include vision, lidar, infrared, sonar, and inertial sensors are used to provide information on the pose (position and orientation) of the vehicle body with respect to an inertial (spatial) reference frame. 
Pose estimation in real time is required for feedback control of translational and rotational motion during the operations of autonomous 
vehicles. The set of all possible poses of a rigid body is given by the special Euclidean group in three dimensions, denoted as $\SE$ in ~\cite{bloch03,bullo2019geometric}. This group is a semi-direct 
product of the real Euclidean linear space of translations $\bR^3$ and the Lie group of rotations represented by $3 \times 3$ real orthogonal matrices with determinant $1$, denoted  $\SO$. The non-contractible and compact configuration space of three-dimensional rotations, $\SO$, makes $\SE$ a non-contractible space,
and therefore, attitude and pose estimation are inherently nonlinear estimation problems. As the translational and rotational motions of rigid bodies are usually coupled and their attitude cannot be directly measured by onboard sensors, a pose estimation scheme has to compute the pose from vector measurements obtained from sensors mounted on the rigid body. 


Attitude estimation, which is a component of pose estimation, has a well-established history. Early research such as \cite{black1964passive,wahba1965least} designed static attitude determination methods from a set of direction vector 
measurements measured at an instant. However, the performance of static attitude determination schemes can be unsatisfactory in the presence of measurement noise and bias components. Therefore, estimation schemes like the 
extended Kalman filter (\cite{shuster1990kalman,quat2006Chou}) and multiplicative extended Kalman filter (\cite{markley1988attitude}) are often used for attitude estimation. Nevertheless, 
these schemes may perform poorly due to the restrictive assumptions they make about measurement noise, process noise, and state transitions; see \cite{crassidis2007survey,ICRA15}. Other prior 
approaches focused on developing attitude estimation schemes on the Lie group $\SO$, e.g.,~\cite{Bonnabel2009sym,mahony2008complimentary,silvestre2010pose,markley2006so3,sanyal2006optimal,lageman2010gradient,Moutinho2015}. These 
approaches avoid the use of unit quaternions or local coordinate descriptions like Euler angles to represent attitude. Local coordinates encounter singularities in representing attitude globally, while attitude estimators using unit quaternions may exhibit unwinding-type instability because 
antipodal unit quaternions on the hypersphere $\bS^3$ correspond to a single attitude. This instability can also result in longer convergence times compared to stable schemes. 
More recent work on nonlinear estimation schemes on $\SO$ and $\SE$ include \cite{mahony2017geometric,hamel2020deterministic,VANGOOR2021,RezaAutomatica21,wang2022observability,mahony2022observer,Barbon2023,mathavaraj2024consensus,chababo24,fogoalmawe24,shaaban2025attitude,wang2025pose,mousse2025efficient,shaaban2025position}.

Despite the recent advances in estimators designed on Lie groups, homogeneous and symmetric spaces, the majority of these designs do not explicitly account for the stability of the estimator and its domain of convergence.
Due to the geometry of the compact manifold $\SO$, no continuous attitude observer can provide convergence of the attitude estimation error to identity from all initial attitude and 
angular velocity estimation errors as shown by~\cite{RezaAutomatica21}. 
A continuous attitude observer or controller can, at best, be {\em almost global} in its region of attraction, as shown in~\cite{bhat2000topological, chaturvedi2011rigid}. For an attitude observer, almost global stability implies that the attitude estimate
converges to the true attitude for almost all initial attitude estimates, except those in a set of zero measure in the state space. The pose estimation scheme presented in \cite{izadi2016rigid} follows
the variational framework of the estimation scheme reported in \cite{Izadi2014rigid}. Moreover, it exhibits almost global asymptotic stability, similar to the variational attitude estimator given in \cite{Izadi2014rigid}.
	
 There are advantages to having finite time stable pose estimation schemes: they have been shown to be more robust to disturbances and noise and provide faster convergence than an asymptotically stable scheme under similar initial conditions~\cite{bhat:2000:finite}. Additionally, a finite time stable estimation 
 scheme automatically satisfies a ``separation principle" in the case estimated state variables are used for feedback control, as observer estimates converge to true states in finite time. Finite time stable (FTS) estimation schemes using 
 sliding mode controllers and neural networks are proposed by \cite{li2015state,zou2011finite}, which are not continuous. Our prior work includes \cite{Bohn2014observer} and \cite{sanyal2014observer}, which proposed almost globally FTS attitude observers. 
 However, they assumed knowledge of the dynamics model including inertia parameters, and did not consider bias in angular velocity measurements. The scheme proposed by \cite{warier2019finite} enables FTS estimation of 
 attitude and unbiased angular velocity measurements on $\Ta\SO$, while \cite{RezaAutomatica21} also estimates an unknown bias in angular velocity measurements, both without requiring knowledge of a dynamics model.
 
This work builds on our prior work on FTS attitude estimation for FTS pose estimation, presenting a novel pose estimation scheme referred to as the Finite-Time Stable Pose Estimator (FTS-PE). Like the FTS attitude 
estimator (FTS-AE) in \cite{RezaAutomatica21}, the FTS-PE uses a Morse-Lyapunov function of estimation errors on the state space to prove its finite-time stability and robustness to measurement noise. But unlike 
the FTS-AE: (a) the state space for the FTS-PE is the tangent bundle $\Ta\SE\simeq\SE\times\se$, which considers coupled rotational and translational motions; and (b) the attitude estimate in the FTS-PE is constructed from pairwise 
combinations of position vectors of (a subset of) points in a three-dimensional (3D) point cloud  measured by an onboard sensor. On the other hand, the FTS-AE scheme can estimate attitude using only inertial vector measurements (e.g., from magnetometer and accelerometer) and/or direction vector measurements (e.g., from a sun sensor).
The key features of the FTS-PE are: (1) it provides pose (both attitude and position) estimates directly on the Lie group $\SE$ of rigid body motions, avoiding singularities or unwinding; (2) it does not require a dynamics model, including knowledge of mass/inertia parameters or a 
measurement noise model; (3) it is uniformly continuous and has almost global finite-time stable (AGFTS) convergence to the true pose and velocities in the absence of measurement errors; and (4) it is robust to time-varying noise in 
measured velocities. These properties are theoretically shown, and validated through numerical simulations that also compare its performance with the variational pose estimator (VPE)~\cite{izadi2016rigid} and the dual quaternion-based pose estimator
in \cite{filipe2015extended}. An experiment using a depth camera sensor with an inbuilt inertial measurement unit (IMU) that gives point cloud and angular velocity measurements, is also used to verify its performance. 

The paper is structured as follows. Section \ref{sec:math} presents the mathematical notations and concepts used in the paper. In Section \ref{sec:Problem}, we introduce the formulation for the problem of pose determination from vector measurements obtained from optical sensors. Section \ref{sec:preliminary} discusses the estimation error dynamics and kinematics. We then generalize Wahba's cost function by choosing a symmetric weight matrix and show that the resulting cost function is a Morse function on the Lie group of rigid-body rotations under some easily satisfied conditions on the weight matrix. This cost function is taken as the potential function for rotational motion, and we also provide some useful lemmas associated with the potential functions for rotations and translations. In Section \ref{sec: ftd}, we present the problem of finite-time stable pose estimation in real-time, present the pose estimation scheme, and prove the finite-time stability of the scheme in the absence of measurement
errors, using nonlinear stability analysis.
Section \ref{sec: robustness} presents an in-depth analysis of the robustness of the FTS pose estimation scheme to measurement errors. Section \ref{sec:sim} provides numerical simulation results of the proposed finite-time stable pose estimator in the absence and presence of 
measurement noise, and its comparison with a few other existing pose estimation schemes. In addition, this section provides results on an experiment implementing the FTS-PE on the ZED 2i stereo depth camera sensor. Finally, we conclude the paper with a summary of results and possible future directions in Section \ref{sec: conc}.

\section{Mathematical Preliminaries}\label{sec:math}
	The field of real numbers is denoted by $\mathbb{R}$, while $\mathbb{R}^n$ and $\mathbb{R}^{n\times m}$ denote the set of real n-dimensional column vectors and real $n\times m$ matrices, respectively. $\mathbb{N}$ denotes the set of natural numbers. 
	The set of all possible attitudes of a rigid body is the special orthogonal group $\SO$ (\cite{murray2017mathematical}), which is defined by:  
	\begin{equation*}
	\SO = \left\{ R \in \bR^{3 \times 3} | R\T R = RR\T = I, \: \det(R) =1 \right\}.
	\end{equation*}
	This is a matrix Lie group under matrix multiplication. The Lie algebra (tangent space at identity) of $\SO$ is denoted $\so$ 
    and identified with the set of $3\times 3$ skew-symmetric matrices:
	\begin{equation*}
	\ensuremath{\mathfrak{so}(3)}=\big\{\mathrm{S}\in\bR^{3 \times 3} \;|\; \mathrm{S}=-\mathrm{S} \T\big\},\; \mathrm{S}= \bbm 0 & -\ms_3 & \ms_2\\ \ms_3 & 0 & -\ms_1\\ -\ms_2 & \ms_1 & 0\ebm.
	\end{equation*}
	Let $(\cdot)^{\times}: \bR^3 \to \so$ denote the bijective map from three-dimensional real Euclidean space $\bR^3$ to $\ensuremath{\mathfrak{so}(3)}$. For a vector $\ms=[\ms_1 \; \ms_2\; \ms_3]\T \in \bR^3$, the matrix $\ms^{\times}$ represents the vector cross product operator, that is $\ms \times r = \ms^{\times}r$, where $r\in \bR^3$; this makes $(\cdot)^{\times}$ a vector space isomorphism. 
	The inverse of $(\cdot)^{\times}$ is denoted by $\vex(\cdot): \so \to \bR^3 $, such that $\vex(\ms^{\times}) = \ms$, for all $\ms^{\times} \in \ensuremath{\mathfrak{so}(3)}$. 
 
 We define the trace inner product on $\bR^{m \times n}$ $\left\langle \cdot, \cdot\right\rangle $ as, $$\left\langle A_1,A_2 \right\rangle = \tr (A_1\T A_2).$$  
Any square matrix $A \in \bR^{n \times n}$ can be written as the sum of unique symmetric and skew-symmetric matrices as given below:
		\begin{align}
		&A = \symf(A) + \skewf(A), \label{eq:SymSkewSum} 
		\end{align}
		where the symmetric and skew-symmetric components are defined as:
		\begin{align}
		\symf(A) = \frac{1}{2} (A+ A \T ),\; &
		\skewf(A) = \frac{1}{2} (A - A \T) \label{eq:SkewDef}.
		\end{align}
		Additionally, if $A_1 \in \mathbb{R}^{n\times n}$ is a symmetric matrix and $A_2 \in \mathbb{R}^{n\times n}$ is a skew-symmetric matrix, then, 
		\begin{align}
		\left\langle A_1, A_2 \right\rangle=& \;0. \label{eq:sym}
		\end{align}
		Thus, the symmetric and skew matrices are orthogonal under the trace inner product. For all $a_1, a_2\in \bR^3$,
		\begin{align}
		\left\langle a_1^{\times}, a_2^{\times} \right\rangle=& \; 2 a_1\T a_2. \label{eq:skew}
		\end{align} 

  The Lie group of rigid body motions in three spatial dimensions is the special Euclidean group, denoted $\SE$. It is the semi-direct product of $\SO$ with $\bR^3$, with $\bR^3$ as the normal subgroup, i.e., $\SE=\SO\ltimes\bR^3$ (\cite{Varadarajan_1984}). If $R\in\SO$ and $b\in\bR^3$, our nominal and adjoint representations of the corresponding element in $\SE$ are, respectively:
  \be
  \mathrm{g} = \begin{bmatrix}
R & b \\
0 & 1
 \end{bmatrix} \in \SE \mbox{ and }
\Ad{\mathrm{g}} =\begin{bmatrix}
R & 0 \\
b^\times R & R
\end{bmatrix}.  \label{SE3repns}
 \ee
The action of $\SE$ on $\bR^3$ is to rotate and translate vectors in $\bR^3$, and is represented as follows:
\begin{align*}
    &\rmg\cdot \mbox{v} = \bbm R & b \\ 0 & 1\ebm \bbm v\\ 1\ebm = \bbm Rv+b \\ 1\ebm, \ \forall\, v\in\bR^3,\, 
  \mbox{ where} \;  \mbox{v} = \bbm v\\ 1\ebm.
\end{align*}
		With these definitions and identities, we formulate the pose estimation problem next.

  \section{Real-time Navigation using 3D Point Cloud Data}\label{sec:Problem}
This section introduces a formulation for the problem of pose determination from vector measurements in the vehicle body-fixed frame $\cB$, along with a velocity measurement model. Our FTS pose estimation scheme using 3D point clouds is based on this formulation, which will be presented in detail in Section \ref{sec: ftd}. 
  \subsection{Pose Measurement Model}
		Let $\mathcal{I}$ denote an inertial frame that is spatially fixed and $\mathcal{B}$ denote the body-fixed frame.
  The attitude of the rigid body is denoted by $R \in \SO$, which transforms vectors in the body frame $\mathcal{B}$ to their counterparts in the inertial frame $\mathcal{I}$. The position of the origin of frame $\cB$ expressed in frame $\cI$ gives the position of the body, denoted as $b \in \bR^3$. The pose is given by the frame transformation from frame $\cB$ to frame $\cI$, as given by eq. \eqref{SE3repns}.
  
Consider a set of $j$ points from a point cloud measured at time $t$ with known and fixed positions in frame $\cI$, denoted as $q_j$. These $j$ points generate $ \big(\begin{smallmatrix} j \\ 2 \end{smallmatrix}\big)$ unique relative position vectors, which are the vectors connecting any two of these observed points.
Let $a_i$ be the relative position of the $i$-th stationary point in frame $\cB$. In the absence of measurement noise, we obtain the position vector of this point in frame $\cI$ as
  \begin{equation}\label{position_vec}
      q_{i} = Ra_{i} + b.
  \end{equation}
The measured vectors in the presence of additive noise can be expressed as:
\begin{equation}
    \bar{a}^{m} = R\T (\bar{q} - b) + \bar{\wp},
\end{equation}
where $\bar{q}$ 
and $\bar{a}^{m}$ are defined as follows: 
\begin{align}
\bar{q} = \dfrac{1}{j}\sum^{j}_{i=1} q_{i},\; 
\bar{a}^{m} = \dfrac{1}{j}\sum^{j}_{i=1} a^{m}_{i}, 
\label{barqbaram-defs}
\end{align}
and $\bar{\wp}$ is the additive measurement noise obtained by averaging the measurement noise vectors for each $a_{i}$.

To estimate the attitude, consider the $ \big(\begin{smallmatrix} j \\ 2 \end{smallmatrix}\big)$ pairwise relative position vectors for the $j$ points selected in the point cloud, denoted $d_{j} = q_{\lambda} - q_{\ell}$ in frame $\cI$. The corresponding vectors in frame $\cB$ as $e_{j} = a_{\lambda} - a_{\ell}$, where $\lambda, \ell$ are any two measured points such that, $\lambda \neq \ell$. 
If the total number of measured vectors, $ \big(\begin{smallmatrix} j \\ 2 \end{smallmatrix}\big) = 2$, then $e_{3} = e_{1} \times e_{2}$ is considered a third measured direction in frame $\cB$ with corresponding vector $d_{3} = d_{1} \times d_{2}$ in frame $\cI$. Therefore,
\begin{equation}\label{D_def}
    d_{j} = Re_{j} \Rightarrow D = RE,
\end{equation}
where  $D = [d_{1} \; \dots \; d_{n} ], \; E = [e_{1} \; \dots \; e_{n} ] \in \bR^{3\times n}$ with $n = 3$ if $ \big(\begin{smallmatrix} j \\ 2 \end{smallmatrix}\big) = 2$ and $n = \big(\begin{smallmatrix} j \\ 2 \end{smallmatrix}\big)$ if $ \big(\begin{smallmatrix} j \\ 2 \end{smallmatrix}\big) > 2$. In the presence of measurement noise, the measured value of matrix $E$ is given by,
\begin{equation}
    E^{m} = R\T D + \mathfrak{L},
\end{equation}
where the columns of matrix $\mathfrak{L} \in \bR^{3\times n}$ are additive noise vectors in the vector measurements made in frame $\cB$.

\subsection{Velocities Measurement Model}
Denote the angular and translational velocity of the rigid
body expressed in frame $\cB$ by $\Omega$ and $\nu$, respectively.
Therefore, the kinematics of the rigid body is
\begin{equation}\label{def_g}
    \dot{R} = R\Omega^{\times}, \quad \dot{b}  = R\nu \; \Rightarrow \dot{\mathrm{g}} = \mathrm{g}\xi^{\vee},
\end{equation}
where $\xi = \begin{bmatrix}
    \small \Omega \\ \nu
\end{bmatrix} \in \bR^{6}$, $\xi^{\vee} = \begin{bmatrix}
    \Omega^{\times} & \nu \\
    0 & 0
\end{bmatrix}$. 
If velocities are directly measured, then the presence of measurement noise, the measured vector of translational and angular velocities, denoted $\xi^m$, is given by
 		\begin{align}
 		\xi^m  = \begin{bmatrix}
    \small \Omega^m \\ \nu^m
\end{bmatrix} = \xi + \delta, \mbox{ where }
 \delta = \begin{bmatrix}\delta_{\Omega}\\ \delta_{\nu}\end{bmatrix}
\label{def_xi_m}
 		\end{align}
 denotes the vector of additive noise in translational and angular velocities. 
		
In the case where only rate gyro measurements of angular velocities are available besides the point cloud measurements, the translational velocity of the rigid body 
can be calculated using these measurements over time, as in \cite{izadi2016rigid}. The rationale for this approach is given by Proposition \ref{implement-prop}.
    
In the following section, we explore the dynamics and kinematics of estimation errors and present preliminary results that contribute to the design of our FTS pose estimation scheme.

\section{Preliminary Concepts for Pose Estimation on $\SE$}\label{sec:preliminary}

The estimated pose and its kinematics are given by
\begin{equation}\label{hat_g}
 \widehat{\mathrm{g}} = \begin{bmatrix}
\widehat{R} & \widehat{b} \\
0 & 1
 \end{bmatrix} \in \SE, \;\; \Dot{\widehat{\mathrm{g}}} = \widehat{\mathrm{g}} \widehat{\xi}^\vee,
\end{equation}
where $\widehat{b}$ is the position estimate, $\widehat{R}$ is the attitude estimate, and $\widehat{\xi}$ is the rigid body velocities estimate, with $\widehat{\mathrm{g}}_0$ as the initial pose estimate. 
The pose estimation error is defined as
\begin{equation}
 h = \mathrm{g}\widehat{\mathrm{g}}^{-1} = \begin{bmatrix}
Q & b - Q\widehat{b} \\
0 & 1
 \end{bmatrix} = \begin{bmatrix}
Q & \chi \\
0 & 1
 \end{bmatrix} \in \SE,
\end{equation}
where $Q = R\widehat{R}\T$ is the attitude estimation error, and $\chi = b - Q\widehat{b}$. In the absence of measurement noise, we have
\begin{equation}\label{varphi}
    \Dot{h} = h \varphi^\vee, \; \mbox{where} \; \varphi(\widehat{\mathrm{g}},\xi^m,\widehat{\xi}) 
= \begin{bmatrix}
\omega   \\
\upsilon 
 \end{bmatrix}
 = \Ad{\widehat{\mathrm{g}}}(\xi - \widehat{\xi}),
\end{equation}
where $\xi^m=\xi \in \bR^{6}$ is the measured rigid body velocities defined as in \eqref{def_xi_m}, $\upsilon$ and $\omega$ are translational and angular velocity estimation errors respectively, and $\Ad{\mathrm{g}}$ as 
defined by eq. \eqref{SE3repns}. The position and attitude estimation error kinematics are:
 \begin{align}
 \begin{split}
     &\Dot{\chi} = Q \upsilon, \quad \Dot{Q} = Q \omega^\times, \\
     \mbox{where } &\omega=\widehat{R}(\Omega - \wh\Omega),\; \upsilon = \widehat{R}(\nu - \wh\nu) - \omega^\times\widehat{b}.
\end{split} \label{err_kine}    
 \end{align}
 			
        The position and the attitude ($\wh{b}$, $\widehat{R}$) estimates are obtained in real time, using the matrices which consist of known inertial vectors $D$ defined in \eqref{D_def}, the corresponding vector measured vectors in the body-fixed frame $E^m$, and the translational and angular velocities measurements $\xi^m$.
        The mass, moment of inertia, and other parameters that occur in the dynamics of the rigid body are unknown. The number of vector measurements can be varying over time, provided that there are at least two non-collinear vectors measured  
        at all times, for unique attitude determination. The design of the FTS pose estimator given in Section \ref{sec: ftd} is shown to provide almost global finite-time stable (AGFTS) estimates of the pose, where these estimates converge to the 
        respective true values $b$ and $R$ in finite time, in the absence of measurement noise. Additionally, the robustness of this estimator in the presence of measurement noise is demonstrated in section \ref{sec: robustness}. 

\subsection{Potential Functions for Pose Estimation Errors}

Consider the following potential function for the position estimation error:
\begin{equation}\label{potential_translation}
    \mathcal{U}_{t}(\wh{g},\bar{a}^{m},\bar{q}) = \dfrac{1}{2} \kappa \;y\T y = \dfrac{1}{2}\kappa \norm{\bar{q} - \wh{R}\bar{a}^{m} - \wh{b}}^{2},
\end{equation}
where $y \equiv y(\wh g,\bar{a}^{m},\bar{q})=  \bar{q} - \wh{R}\bar{a}^{m} - \wh{b}$ is the position estimation errors and $\kappa>0$ is a scalar gain. 
Now, consider the potential function for attitude estimation error. The attitude estimate, denoted by $\widehat{R} \in \SO$, is obtained known inertial vectors $d_1,\ldots,d_n $ and the corresponding vectors $e^m_1,\ldots, e^m_n$ measured in the body frame, where $n$ is 
as defined after eq. \eqref{D_def}. Attitude determination is formulated as the following optimization problem:
\begin{align}
\begin{split}
    &\text{Min}_{\widehat{R}}\, \mathcal{U}_{r}(\widehat{R}, D, E^m)\, \mbox{ where} \\
    \cU_r (\widehat{R}, D, E^m) &=  
    \frac{1}{2} \sum_{i}^{n} w_i (d_i - \widehat{R} e^m_i)\T (d_i - \widehat{R} e_i^m) \\
    &= \frac{1}{2} \left\langle D-\widehat{R}E^m, (D-\widehat{R}E^m) W\right\rangle,
\end{split} \label{eq:wahbacost}
\end{align}
where $w_i>0$ are weight factors. This is referred to as Wahba's problem in the attitude determination and estimation literature, first formulated in \cite{wahba1965least}. 
Here $W = \diag ([w_1,\; w_2,\; \ldots,\; w_n ]) \in \bR^{n\times n}$. The cost function can be generalized such that $W$ is a positive semi-definite matrix satisfying conditions given by 
the following lemmas. Therefore, the total potential function is obtained as a sum of the translational and rotational potential functions and is given by:
\begin{align}
  \begin{split}
    &\mathcal{U}(\wh{g},\bar{a}^{m},\bar{q},D, E^m) = \mathcal{U}_{t}(\wh{g},\bar{a}^{m},\bar{q}) + \mathcal{U}_{r}(\widehat{g}, D, E^m) \\
    &= \dfrac{1}{2} \kappa \;y\T y 
    + \frac{1}{2} \left\langle D-\widehat{R}E^m, (D-\widehat{R}E^m) W\right\rangle,
    \end{split} \label{potential_combined}
\end{align}
where $\mathcal{U}_{t}(\wh{g},\bar{a}^{m},\bar{q}) \; \mbox{and} \; \mathcal{U}_{r}(\widehat{g}, D, E^m)$ are defined by equations \eqref{potential_translation} and \eqref{eq:wahbacost} respectively.
	 The following lemma gives the generalized cost function in the absence of measurement errors.
	\begin{lemma} \label{costU}
		{ Define $Q = R\widehat{R}\T$ as the attitude estimation error.} Let $D \in \bR^{3 \times n}$ be as defined as in \eqref{D_def} with $rank(D) =3$. Let the gain matrix $W$ of the generalization of Wahba's cost function be given by,
		\begin{align}
		W & = D\T \big(DD\T\big)^{-1} K \big(DD\T\big)^{-1} D, \label{eq:Wexp}
		\end{align} 
		where $K = \diag([k_1, k_2, k_3])$ and $k_1 > k_2 > k_3 \geq 1 $. Then, in the absence of measurement errors,  
		\begin{align}\label{Udef}
		\mathcal{U}_r(h, D) &= \frac{1}{2} \left\langle D-\widehat{R}E^m, (D-\widehat{R}E^m) W\right\rangle \nn \\ 
        &= \left\langle K, I - Q \right\rangle,
		\end{align}		
		which is a Morse function on $\SO$ whose critical points are elements of the set,	
		\begin{align}
		\mathcal{C} & = \big\{ I, \diag([-1,-1,1]), \diag([1,-1,-1]), \nonumber \\
		&\quad \quad \diag([-1,1,-1])\big\} \subset \SO. \label{eq:Qset}
	\end{align} 
In addition, $\mathcal{U}_r $ has a global minimum at $Q=I$.
	\end{lemma}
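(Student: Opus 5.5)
In the absence of measurement errors we have $E^m=E=R\T D$. The identity \eqref{Udef} then follows by direct computation, so the plan is to simplify the quadratic form first and then analyze the resulting trace function on $\SO$.

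\textbf{Step 1: simplify the cost.} Substituting $E^m = R\T D$ gives $D-\widehat{R}E^m = (I-\widehat{R}R\T)D = (I-Q\T)D$. Hence
\begin{align*}
2\,\mathcal{U}_r &= \tr\big(D\T (I-Q)(I-Q\T) D W\big) = \tr\big((I-Q)(I-Q\T) D W D\T\big).
\end{align*}
From \eqref{eq:Wexp}, $DWD\T = (DD\T)(DD\T)^{-1}K(DD\T)^{-1}(DD\T) = K$. This uses $\mathrm{rank}(D)=3$, so that $DD\T$ is invertible. Expanding $(I-Q)(I-Q\T) = 2I - Q - Q\T$ yields $2\,\mathcal{U}_r = \tr(2K - QK - Q\T K)$. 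Since $K$ is symmetric, $\tr(Q\T K)=\tr(KQ)=\tr(QK)$, so $\mathcal{U}_r=\tr(K(I-Q))=\langle K, I-Q\rangle$.

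\textbf{Step 2: critical points.} Take a variation $\delta Q = Q\eta^\times$ with $\eta\in\bR^3$. Then
\[
\delta\mathcal{U}_r = -\tr(KQ\eta^\times) = -\langle \skewf(KQ)\T,\ \eta^\times\rangle\ \text{(up to sign)},
\]
and by \eqref{eq:sym}–\eqref{eq:skew} this is proportional to $\eta\T \vex(\skewf(KQ))$. Critical points are therefore exactly the $Q$ for which $KQ$ is symmetric, i.e.\ $KQ=Q\T K$. Write $P=KQ$. Then $P\T P = Q\T K^2 Q$ and $P^2 = KQKQ$, and symmetry of $P$ forces $Q\T K^2 Q = K Q K Q$. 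I would rather use the following argument: $KQ$ symmetric gives $KQ = Q\T K$, hence $K Q^2 = Q\T K Q$, which is symmetric. So $KQ^2$ is also symmetric.

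The cleanest route is to note that $Q\T K Q = Q\T(KQ) = Q\T Q\T K$. Alternatively, square the relation: $(KQ)^2 = KQ\,Q\T K = K^2$. So $S=KQ$ is a symmetric matrix with $S^2=K^2$, where $K^2$ has distinct positive eigenvalues. Because $S$ commutes with $S^2=K^2$, $S$ must be diagonal, and then $S=\diag(\pm k_i)$. It follows that $Q=K^{-1}S$ is diagonal with entries $\pm1$ and $\det Q=1$, which yields exactly the set $\mathcal{C}$ in \eqref{eq:Qset}.

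\textbf{Step 3: Morse property and global minimum.} I would compute the Hessian at each $Q\in\mathcal{C}$. Expanding to second order, $\mathcal{U}_r(Q\exp(\eta^\times)) \approx \mathrm{const} - \tr(KQ\eta^\times) - \tfrac12\tr(KQ(\eta^\times)^2)$. At critical points the linear term vanishes. Using $(\eta^\times)^2=\eta\eta\T-\|\eta\|^2I$, the Hessian is
\[
\tfrac12\big(\tr(KQ)\,I - KQ\big),
\]
with $KQ=\diag(s_1,s_2,s_3)$. Its eigenvalues are $\tfrac12(s_j+s_l)$ for $j\neq l$. At $Q=I$ these are $\tfrac12(k_j+k_l)>0$. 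At the other three points, the strict ordering $k_1>k_2>k_3$ ensures that each $s_j+s_l$ is nonzero, with mixed or negative signs. Hence all critical points are nondegenerate, the function is Morse, and the indices are $0,1,2,2$ (or similar). Finally, $\tr(K(I-Q))=\sum_i k_i(1-Q_{ii})\ge 0$ because $|Q_{ii}|\le1$, with equality only when $Q=I$; this shows the global minimum.

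\textbf{Main obstacle.} I expect the critical-point classification in Step 2 to be the most delicate step. The symmetry of $KQ$ together with the distinct eigenvalues of $K$ must force $Q$ to be diagonal, and the $S^2=K^2$ commuting argument above is what handles it. The nondegeneracy check also depends on the strict inequalities between the $k_i$.
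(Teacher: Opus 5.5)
Your proof is correct. The paper does not prove this lemma in the text; it defers to \cite{RezaAutomatica21} and \cite{sanyal2006optimal}. So your argument is a self-contained proof rather than a variant of an in-paper one.

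Step 1 is the natural trace computation, and $\mathrm{rank}(D)=3$ is used exactly to get $DWD^\top=K$. The heart of Step 2 is that symmetry of $S=KQ$ gives $S^2=SS^\top=KQQ^\top K=K^2$. Hence $S$ commutes with a diagonal matrix with distinct entries and must itself be diagonal. This is a clean, purely algebraic route that avoids parametrizing $Q$ (e.g.\ by Rodrigues' formula). It also shows that only $k_1>k_2>k_3>0$ is needed here; the hypothesis $k_3\ge 1$ matters only for Lemma~\ref{SKQlemma}. When you write it up, drop the exploratory remarks about $P^\top P$ and $KQ^2$. Also add the one-line converse: every element of $\mathcal{C}$ makes $KQ$ diagonal, hence symmetric, hence critical.

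One claim in Step 3 is wrong, although the lemma does not depend on it: the Morse indices are $0,1,2,3$, not ``$0,1,2,2$''. Write $KQ=\mathrm{diag}(s_1,s_2,s_3)$; the second-order term is then $\tfrac12\eta^\top\mathrm{diag}(s_2+s_3,\,s_1+s_3,\,s_1+s_2)\eta$.

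At $Q=\mathrm{diag}(-1,-1,1)$ the diagonal entries are $k_3-k_2$, $k_3-k_1$ and $-(k_1+k_2)$, all negative. So this point is the global maximum, with value $2(k_1+k_2)$ and index $3$. At $\mathrm{diag}(1,-1,-1)$ the entries are $-(k_2+k_3)$, $k_1-k_3$, $k_1-k_2$, giving index $1$. At $\mathrm{diag}(-1,1,-1)$ they are $k_2-k_3$, $-(k_1+k_3)$, $k_2-k_1$, giving index $2$.

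A Morse function on the compact manifold $SO(3)$ must have an index-$3$ critical point at its maximum, so your list could not have been right. Separately, with the convention $f\approx f_0+\tfrac12\eta^\top H\eta$ the Hessian is $\mathrm{tr}(KQ)I-KQ$, without your extra factor $\tfrac12$. This does not affect signs or nondegeneracy.
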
	
The proof of this result is given in~\cite{RezaAutomatica21} based on \cite{sanyal2006optimal}, and is not repeated here. 

The instantaneous attitude determination problem can be solved by determining $\widehat{R}$ that minimizes $\mathcal{U}_r (\cdots)$ at any given instant. 
For dynamic pose estimation, substituting \eqref{Udef} in \eqref{potential_combined} gives the total potential function in the absence of measurement errors as follows:
\begin{align}
    \mathcal{U}(h,D,\bar{q}) &= \mathcal{U}_{t}(h,\bar{q}) + \mathcal{U}_{r}(h, D) \nonumber \\
    &= \frac{1}{2} \kappa \;y\T y + \left\langle K, I - Q \right\rangle, \label{totalPF}
\end{align}
where
\begin{align}\label{y_perfect}
 y \equiv y(h,\bar{q}) &= \bar{q} - \wh{R}\bar{a} - \wh{b} \nn\\
 & = Q\T \chi + (I-Q\T)\bar{q},
\end{align}
as $\wh b = Q\T(b - \chi)$.

The following lemma, which relates the error in the attitude estimation to the error in angular velocity estimation, is used in section \ref{sec: ftd} to prove the main result.
\begin{lemma}\label{lemma:ineq}
		Let $K$ be as defined in Lemma \ref{costU}. Then $K=DWD\T$, and in the absence of measurement errors, 
        \be KQ= L\wh{R}\T \mbox{ where } L= DW(E^m)\T. \label{KLrelation} \ee
        Further, in this case the time derivative 
		of $\mathcal{U}_r$ along  trajectories satisfying the kinematic equations \eqref{def_g}, \eqref{def_xi_m} and \eqref{hat_g} is given by
		\begin{align}
		\frac{\di}{\di t}\cU_r(h, D) &= \frac{\di}{\di t}\left\langle K,I-Q\right\rangle = s_K(Q)\cdot \omega \nn \\
        &= s_L(\wh R, D, E^m)\cdot\omega, 
  \label{KIQder} 
		\end{align}
		where 
$s_K(Q) = s_L (\wh R, D, E^m)$ are given by \begin{align}
		&s_K(Q) = \vex(KQ-Q\T K), \mbox{ and }  \label{eq:SKQ} \\
  &s_L (\wh R, D, E^m) = \vex(L {\wh R}\T- \wh R L\T). \label{tilOmSLdefs} 
\end{align}  
	\end{lemma}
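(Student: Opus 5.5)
The plan is a direct matrix computation in three steps: the identity $K = DWD\T$, then $KQ = L\wh R\T$, then the derivative formula.

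\emph{Step 1: $K = DWD\T$.} Substitute \eqref{eq:Wexp} into $DWD\T$. This gives
\[
DWD\T = (DD\T)(DD\T)^{-1}K(DD\T)^{-1}(DD\T) = K,
\]
which is valid because $\mathrm{rank}(D)=3$ makes $DD\T$ invertible.

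\emph{Step 2: $KQ = L\wh R\T$.} In the absence of measurement errors, $E^m = E = R\T D$. Hence
\[
L = DW(E^m)\T = DWD\T R = KR.
\]
Multiplying on the right by $\wh R\T$ gives $L\wh R\T = KR\wh R\T = KQ$, which is \eqref{KLrelation}. Taking transposes, and using that $K$ is symmetric, gives $\wh R L\T = Q\T K$. Substituting both into \eqref{tilOmSLdefs} yields $s_L(\wh R,D,E^m) = \vex(KQ - Q\T K) = s_K(Q)$.

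\emph{Step 3: the derivative.} By Lemma~\ref{costU}, $\cU_r = \langle K, I-Q\rangle$. Using the error kinematics \eqref{err_kine}, $\dot Q = Q\omega^\times$, so
\[
\frac{\di}{\di t}\cU_r = -\tr(K\dot Q) = -\tr(KQ\omega^\times).
\]
Only the skew part of $KQ$ contributes to this trace, since \eqref{eq:sym} makes symmetric and skew matrices orthogonal. Writing $\skewf(KQ) = \tfrac12 (KQ - Q\T K) = \tfrac12 s_K(Q)^\times$, we get
\[
-\tr(KQ\omega^\times) = \langle KQ, \omega^\times\rangle = \tfrac12\langle s_K(Q)^\times, \omega^\times\rangle = s_K(Q)\cdot\omega.
\]
The first equality uses $\tr(A\,\omega^\times) = -\langle A, \omega^\times\rangle$, and the last uses \eqref{eq:skew}. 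Combined with Step 2, this gives \eqref{KIQder}.

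The main obstacle is bookkeeping rather than anything conceptual. I must confirm that the error kinematics $\dot Q = Q\omega^\times$, with $\omega = \wh R(\Omega - \wh\Omega)$, follows from \eqref{def_g} and \eqref{hat_g}. This holds because
\[
\dot Q = R\Omega^\times \wh R\T - R\wh\Omega^\times \wh R\T = Q\,\wh R(\Omega-\wh\Omega)^\times \wh R\T = Q\,\bigl(\wh R(\Omega-\wh\Omega)\bigr)^\times.
\]
I must also track the sign conventions in the trace identities carefully.
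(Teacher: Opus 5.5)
Your proof is correct. Substituting $W$ gives $K = DWD\T$, and $E^m = E$ gives $L = KR$, hence $KQ = L\wh R\T$ and $s_L = s_K(Q)$. The derivative computation, using $\dot Q = Q\omega^\times$, $\mathrm{tr}(A\omega^\times) = -\langle A,\omega^\times\rangle$ and the orthogonality of symmetric and skew matrices, has consistent signs. The paper gives no proof of its own and defers to \cite{izadi2016rigid}; your direct matrix computation is the natural self-contained argument for this lemma, so there is nothing to add.
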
	
The proof of this lemma is given in~\cite{izadi2016rigid} 
and is omitted here for brevity. 

Therefore, from equations \eqref{err_kine} and \eqref{KIQder}, the time derivative of the total potential function $\cU$ is obtained as
\begin{align}
    \frac{\di}{\di t}\mathcal{U}(h,D,\bar{q}) &= \kappa \;y\T (\upsilon + \omega^{\times}\bar{q}) + s_K (Q)\cdot \omega, \label{cUdot}
\end{align}
which is used in the next section to prove the main result. Hereafter, for the sake of notational convenience, the total potential function is denoted as $\mathcal{U}$.  
\subsection{Useful Prior Results} 
	The lemmas given here are essential to prove the main result on finite time stable pose estimation scheme given in Section \ref{sec: ftd}.
	\begin{lemma} \label{binom}
Let $x$ and $y$ be non-negative real numbers and let $p\in ]1,2[$. Then 
\be x^{(1/p)}+ y^{(1/p)} \ge (x+y)^{(1/p)}. \label{bires} \ee
Moreover, the above inequality is a strict inequality if both $x$ and $y$ are non-zero.
\end{lemma}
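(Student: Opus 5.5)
The inequality is the subadditivity of the concave power map $f(s)=s^{r}$ on $[0,\infty)$ with exponent $r=1/p\in\,]1/2,1[$, so I would prove it by normalization. The case where $x$ or $y$ is zero is immediate: equality holds, for instance $x^{1/p}+0=x^{1/p}=(x+0)^{1/p}$. This also shows that strictness can only be claimed when both numbers are nonzero.

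Now suppose $x,y>0$ and put $s=x+y>0$. Define $\alpha=x/s$ and $\beta=y/s$, so that $\alpha,\beta\in\,]0,1[$ and $\alpha+\beta=1$. Dividing both sides of \eqref{bires} by $s^{r}>0$ reduces the claim to
\be \alpha^{r}+\beta^{r} > 1 = \alpha+\beta. \nn \ee
For $t\in\,]0,1[$ and $0<r<1$ we have $t^{r}>t$, because $t^{r}/t=t^{r-1}>1$ (as $r-1<0$ and $0<t<1$). Applying this to $\alpha$ and to $\beta$ and adding gives $\alpha^{r}+\beta^{r}>\alpha+\beta=1$. Multiplying back by $s^{r}$ yields $x^{1/p}+y^{1/p}>(x+y)^{1/p}$, which is the strict inequality.

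There is no real obstacle. The only points needing care are the zero cases, which must be separated out because the normalized quantities would otherwise be degenerate, and the observation that the argument needs only $0<r<1$, i.e.\ $p>1$. The restriction $p<2$ plays no role here; it is presumably imposed for the later finite-time stability analysis. An alternative route is to note that $g(x)=x^{r}+y^{r}-(x+y)^{r}$ satisfies $g(0)=0$ and $g'(x)=r\left(x^{r-1}-(x+y)^{r-1}\right)>0$ for $y>0$, which gives the same conclusion.
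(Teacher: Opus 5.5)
Your proof is correct and complete. The zero cases give equality. For $x,y>0$, the normalization $\alpha=x/(x+y)$, $\beta=y/(x+y)$ reduces \eqref{bires} to $\alpha^{1/p}+\beta^{1/p}>\alpha+\beta=1$, which follows termwise from $t^{1/p}>t$ on $]0,1[$.

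The paper itself does not prove this lemma; it defers to the earlier Bohn--Sanyal papers, so there is no in-paper argument to compare step by step. Your route has three merits:
\begin{itemize}
\item It is fully elementary, with no series expansion and no calculus.
\item It shows that $p>1$ is the only hypothesis actually used. The restriction $p<2$ is needed elsewhere, for example so that $z_1$ in \eqref{z1def}, whose norm is $\|s_L\|^{2/p-1}$, stays continuous at $s_L=0$.
\item It extends verbatim to finitely many summands: if $\alpha_i\ge 0$ and $\sum_i\alpha_i=1$, then $\sum_i\alpha_i^{1/p}\ge\sum_i\alpha_i=1$.
\end{itemize}
The last point matters because the proof of Theorem~\ref{FTSestwbias} uses the lemma with three summands when passing from \eqref{Vdot2} to \eqref{fts_Lyp}. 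The two-term statement covers that only after being applied twice.
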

The interested reader can find detailed proof of this result in~\cite{Bohn2014observer,bohn2016almost}. For brevity, we omit the proof in this paper.	
\begin{lemma}\label{SKQlemma}
		{ Let $K$ be as defined in Lemma \ref{costU} and $s_K(Q)$ be as given in the equation \eqref{eq:SKQ}. Let $\cS\subset\SO$ be a closed subset containing the identity in its interior, defined by
			\begin{align}
			\cS =& \big\{ Q\in\SO\, :\, Q_{ii}\ge 0 \mbox{ and } Q_{ij}Q_{ji}\le 0 \nn \\
			&\forall i,~ j\in \{1,2,3\},\ i\ne j\big\}. \label{cSdefn}
			\end{align}
			If $Q\in\cS$, then it satisfies
			\be s_K(Q)^T s_K(Q) \ge \tr (K-KQ). \label{sRbound} \ee	}
\end{lemma}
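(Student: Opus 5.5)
The plan is a direct entrywise computation that compares both sides of \eqref{sRbound} through the entries $Q_{ij}$ of $Q$. It uses only three facts: the sign conditions defining $\cS$ in \eqref{cSdefn}, the orthonormality of the rows of $Q\in\SO$, and $k_i\ge 1$ from Lemma \ref{costU}.

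\emph{Step 1 (expand the left side).} Since $K=\diag([k_1,k_2,k_3])$, the $(i,j)$ entry of the skew-symmetric matrix $KQ-Q\T K$ is $k_iQ_{ij}-k_jQ_{ji}$. Applying $\vex(\cdot)$ picks out the three entries above the diagonal, up to sign, so
\begin{align*}
s_K(Q)\T s_K(Q) = \sum_{i<j}\big(k_iQ_{ij}-k_jQ_{ji}\big)^2 .
\end{align*}
Expanding each square gives $k_i^2Q_{ij}^2+k_j^2Q_{ji}^2-2k_ik_jQ_{ij}Q_{ji}$. On $\cS$ we have $Q_{ij}Q_{ji}\le 0$, so the cross term is nonnegative. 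Hence
\begin{align*}
s_K(Q)\T s_K(Q)\ \ge\ \sum_{i=1}^3\sum_{j\ne i} k_i^2 Q_{ij}^2 .
\end{align*}

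\emph{Step 2 (bound the right side).} Write $\tr(K-KQ)=\sum_i k_i(1-Q_{ii})$. Row orthonormality of $Q$ gives $\sum_{j\ne i}Q_{ij}^2 = 1-Q_{ii}^2$. On $\cS$ we also have $0\le Q_{ii}\le 1$, and therefore
\begin{align*}
1-Q_{ii}=\frac{1-Q_{ii}^2}{1+Q_{ii}}\le 1-Q_{ii}^2=\sum_{j\ne i}Q_{ij}^2 .
\end{align*}
Consequently $\tr(K-KQ)\le \sum_i k_i\sum_{j\ne i}Q_{ij}^2$. Because $k_i\ge k_3\ge 1$, we have $k_i\le k_i^2$, so this is at most $\sum_i k_i^2\sum_{j\ne i}Q_{ij}^2$.

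\emph{Step 3 (combine).} The upper bound from Step 2 coincides with the lower bound from Step 1, which yields \eqref{sRbound}.

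The only delicate point is the direction of the diagonal inequality in Step 2. The naive comparison $1-Q_{ii}\ge 1-Q_{ii}^2$ runs the wrong way, and the fix is to use $Q_{ii}\ge 0$ through the factorization $1-Q_{ii}^2=(1-Q_{ii})(1+Q_{ii})$. This, together with the condition $Q_{ij}Q_{ji}\le0$ that removes the cross terms, is exactly why the result is restricted to the set $\cS$. The sign convention of $\vex$ does not matter, since only squares of the components appear.
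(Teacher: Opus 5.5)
Your proof is correct and complete. Dropping the cross terms $-2k_ik_jQ_{ij}Q_{ji}\ge 0$ gives $s_K(Q)^{T}s_K(Q)\ge\sum_{i\ne j}k_i^2Q_{ij}^2$. The chain $k_i(1-Q_{ii})\le k_i(1-Q_{ii}^2)=k_i\sum_{j\ne i}Q_{ij}^2\le k_i^2\sum_{j\ne i}Q_{ij}^2$ then closes the argument; it holds because $0\le Q_{ii}\le 1$ and $k_i\ge 1$.

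The paper gives no proof of its own here and defers to \cite{bohn2016almost}, so your entrywise computation serves as a self-contained verification. It also shows that only $k_i\ge 1$ is used, not the ordering $k_1>k_2>k_3$. And it pins down exactly where each of the two sign conditions defining $\mathcal{S}$ enters.

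One comment on the statement itself. You rely only on the defining inequalities of $\mathcal{S}$ and never on the descriptive claim that $I$ lies in the interior of $\mathcal{S}$. That is just as well, because the claim is false. Take $Q=\exp(\phi\,n^{\times})$ with $n=(1,1,0)^{T}/\sqrt{2}$. Rodrigues' formula gives $Q_{12}=Q_{21}=(1-\cos\phi)/2>0$ for every $\phi\in(0,2\pi)$, so $Q\notin\mathcal{S}$ even though $Q\to I$ as $\phi\to 0$. This does not affect \eqref{sRbound} on $\mathcal{S}$. It does matter wherever $\mathcal{S}$ is treated as a neighborhood of $I$, as in the proof of Theorem~\ref{FTSestwbias}.
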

   \vspace{-3mm}
The proof of this result is given in~\cite{bohn2016almost} and is omitted here for brevity. 

The design and stability result of the finite-time stable estimator is given in the following section. Note that the pose estimation error $h=g\widehat{g}^{-1}$ is defined on the special Euclidean group of rigid body motion, $\SE$, which is not a vector space.
Therefore, for stable convergence of the pose estimation error, the total potential function $\cU:\SE\to\bR$ defined by \eqref{totalPF} serves as a Morse function on $\SE$. This forms part of a Morse-Lyapunov function, as defined in the proof in Theorem \ref{FTSestwbias} in Section \ref{sec: ftd}. 
This Morse-Lyapunov function is thereafter shown to guarantee convergence of state estimation errors $(h,\varphi)$ to $(I,0)$ in finite time.

\section{Finite-time Stable Pose Estimation} \label{sec: ftd}
The main result for the proposed finite-time stable pose observer for estimation of rigid body position and orientation is presented here. The uniformly continuous, finite-time stability of the resulting observer is shown using a H\"{o}lder-continuous Morse-Lyapunov function.

\begin{theorem}\label{FTSestwbias}
Consider the pose kinematics, position vectors obtained from the point cloud measurements, and velocity measurements given by equations \eqref{def_g}, \eqref{def_xi_m} and \eqref{hat_g} in the absence of measurement noise. 
Let $p\in]1,2[\,$ and $ \kappa,\, k_p,\, k_\omega, \, k_\upsilon,\, \alpha_1,\, \alpha_2$ be positive observer gains, $L$ and $s_L (\cdots)$ be as defined in Lemma \ref{lemma:ineq}, $y$ be as defined after eq. \eqref{potential_translation}, 
and define the following quantities:
{\allowdisplaybreaks
\begin{align}
    & z_{1}(\widehat{R},D,E^m) = \dfrac{s_{L}}{\Big(s_{L}\T s_{L}\Big)^{1-1/p}}, \label{z1def} \\
    & z_{2}(\wh{g}, \bar{a}^m, \bar{q}) = \dfrac{y}{(y\T y)^{1-1/p}}, \label{z2def} \\
    & \Psi(\widehat{R},D,E^m, \omega) = \omega + \alpha_{1} \;z_{1}(\widehat{R},D,E^m), \label{Psidef} \\
    & \Phi(\wh{g}, \bar{a}^m, \bar{q}, \omega, \upsilon)  = \upsilon + \omega^\times \bar{q}+ \alpha_{2}  \;z_{2}(\wh{g},\bar{a}^m, \bar{q}), \label{Phidef} \\
    & w_{L} (\wh{R},\omega) = \frac{\di}{\di t}s_{L}(\wh{R}) = \vex(L \wh{R}\T \omega^{\times} + \omega^{\times}\wh{R}L\T), 
    \label{wLdef} \\
    & v_{y} = \frac{\di}{\di t}y = \upsilon + \omega^\times (\Bar{q} -y).
    \label{vydef}
\end{align}	}
Now, consider the following observer equations: 
\begin{align}
		\dot\omega &= -k_p s_L 
        - k_{\omega}\frac{\Psi}{\big(\Psi\T\Psi\big)^{1-1/p}} -\frac{\alpha_{1} }{\big(s{_L}\T s_{L}\big)^{1-1/p}}H(s_L) w_L, 
  \label{obs1} \\
        \dot \upsilon &= \bar{q}^\times \dot\omega - k_{p}\; \kappa \;y - k_{\upsilon}\frac{\Phi}{\big(\Phi\T\Phi\big)^{1-1/p}} - \frac{\alpha_{2} }{\big(y\T y\big)^{1-1/p}}H(y) v_y,      \label{obs2} \\
        \wh\xi &= \xi^{m} - \Ad{\wh{\mathrm{g}}^{-1}}\varphi, \label{obs3} \\
         \dot{\wh g} &= \wh g\wh\xi^\vee , \label{obs4}
\end{align}
where $\xi^m$ and $\varphi$ are defined as in \eqref{def_xi_m} and \eqref{varphi}, respectively, the functional dependencies of $s_L$, $w_L$, $v_y$, $z_{1}$, $z_{2}$, $\Psi$ and $\Phi$ have been suppressed for notational convenience, and $H:\bR^3\to \mbox{Sym}(3)$, the space of symmetric $3\times 3$ real matrices, is defined by \be H(x)= I-\frac{2(1-1/p)}{x\T x} xx\T. \label{Hdef} \ee
Then the pose and velocity estimation errors $(h,\varphi)$ converge to $(I,0)\in\SE\times\bR^6$  in a finite time stable manner, from almost all initial conditions except those in a set of measure zero.	
\end{theorem}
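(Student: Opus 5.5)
The plan is a Lyapunov argument of backstepping type. The key observation is that, by construction, $\dot\omega$ and $\dot\upsilon$ are exactly such that the auxiliary variables $\Psi$ and $\Phi$ obey finite-time stable dynamics. First compute the derivatives of $z_1$ and $z_2$. For any $x\neq 0$,
\begin{equation*}
\frac{\di}{\di t}\frac{x}{(x\T x)^{1-1/p}}=\frac{1}{(x\T x)^{1-1/p}}H(x)\dot x .
\end{equation*}
Applying this with $x=s_L$, $\dot x=w_L$ (from \eqref{wLdef}) and with $x=y$, $\dot x=v_y$ (from \eqref{vydef}), and using \eqref{obs1}--\eqref{obs2}, we obtain
\begin{equation*}
\dot\Psi=-k_p s_L-k_\omega\frac{\Psi}{(\Psi\T\Psi)^{1-1/p}},\qquad
\dot\Phi=-k_p\kappa y-k_\upsilon\frac{\Phi}{(\Phi\T\Phi)^{1-1/p}} .
\end{equation*}
For the second identity we use $\frac{\di}{\di t}(\omega^\times\bar q)=-\bar q^\times\dot\omega$ when $\bar q$ is constant, which cancels the $\bar q^\times\dot\omega$ term in \eqref{obs2}. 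Equations \eqref{obs3}--\eqref{obs4} guarantee that $\varphi=\Ad{\wh g}(\xi-\wh\xi)$ is the quantity being integrated, so the error kinematics \eqref{err_kine} hold.

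Next, take the Morse--Lyapunov candidate
\begin{equation*}
V=k_p\,\mathcal{U}+\tfrac12\Psi\T\Psi+\tfrac12\Phi\T\Phi ,
\end{equation*}
with $\mathcal{U}$ as in \eqref{totalPF}. By \eqref{cUdot} and Lemma \ref{lemma:ineq}, $k_p\dot{\mathcal U}=k_p s_L\T\omega+k_p\kappa y\T(\upsilon+\omega^\times\bar q)$. Substitute $\omega=\Psi-\alpha_1 z_1$ and $\upsilon+\omega^\times\bar q=\Phi-\alpha_2 z_2$. The cross terms $k_p s_L\T\Psi$ and $k_p\kappa y\T\Phi$ cancel against those in $\Psi\T\dot\Psi$ and $\Phi\T\dot\Phi$, leaving
\begin{equation*}
\dot V=-k_p\alpha_1(s_L\T s_L)^{1/p}-k_p\kappa\alpha_2(y\T y)^{1/p}-k_\omega(\Psi\T\Psi)^{1/p}-k_\upsilon(\Phi\T\Phi)^{1/p}\le 0 .
\end{equation*}
Hence $V$ is nonincreasing. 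By LaSalle-type reasoning the errors converge to the set where $s_K(Q)=0$, $y=0$, $\Psi=\Phi=0$. These conditions give $\omega=0$, $\upsilon=0$, $Q\in\mathcal C$, and then $\chi=(I-Q)\bar q$ from \eqref{y_perfect}. By Lemma \ref{costU} the three nonidentity critical points are saddles or maxima of the Morse function $\mathcal U_r$, so their stable manifolds have measure zero, giving almost-global convergence to $(I,0)$.

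For finite time, restrict to a sublevel set of $V$ on which $Q\in\cS$. Almost all trajectories enter such a set in finite time, since they converge to $Q=I$, which lies in the interior of $\cS$. On this set Lemma \ref{SKQlemma} gives $s_L\T s_L\ge\tr(K-KQ)=\mathcal U_r$, and $y\T y=2\mathcal U_t/\kappa$. Each term of $V$ is therefore bounded by a constant times the corresponding squared quantity in $\dot V$. Applying Lemma \ref{binom} repeatedly, $(a^{1/p}+b^{1/p}+c^{1/p}+d^{1/p})\ge(a+b+c+d)^{1/p}$, yields $\dot V\le -\beta V^{1/p}$ for some $\beta>0$. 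Since $1/p\in]1/2,1[$, $V$ reaches zero in finite time, bounded by $V_0^{1-1/p}/(\beta(1-1/p))$.

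The main obstacle is making the argument rigorous despite the singularities of $z_1,z_2$ and of $H$ at $s_L=0$ or $y=0$. Here the observer is only defined by continuous extension, and the derivative formulas above must be read almost everywhere, or shown to be consistent on trajectories, so that the Lyapunov argument remains valid. A second delicate step is the claim that the bad set is measure zero. I would argue it via the hyperbolicity of the nonidentity critical points of the Morse function, lifted to the full error state, together with finite-time entry into $\cS$.
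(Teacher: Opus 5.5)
Your argument matches the paper's. You use the same $\mathcal{V}=k_p\mathcal{U}+\tfrac12\Psi^{\top}\Psi+\tfrac12\Phi^{\top}\Phi$ and the same cancellation: your reduction of $\dot\Psi,\dot\Phi$ to $-k_ps_L-k_\omega\Psi(\Psi^{\top}\Psi)^{1/p-1}$ and $-k_p\kappa y-k_\upsilon\Phi(\Phi^{\top}\Phi)^{1/p-1}$ is exactly what produces \eqref{v_dot_subs}. You also use Lemmas~\ref{SKQlemma} and~\ref{binom} in the same way to reach $\dot{\mathcal{V}}\le-k_0\mathcal{V}^{1/p}$, and the same invariance step. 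In several places your version is tighter:
\begin{itemize}
\item The forward-invariant small sublevel set replaces the paper's two-case argument. That argument applies the $\mathcal{V}^{1/p}$ bound before $Q$ has entered $\mathcal{S}$, and never shows that $\mathcal{S}$ is retained.
\item You correctly get $\chi=(I-Q)\bar q$ on the critical set. The paper's $Q^{\top}\chi=0$ holds in general only at $Q=I$.
\item Your settling-time bound keeps the factor $1/(1-1/p)$, which the paper's $t_s$ omits.
\end{itemize}

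The step that fails as you propose it is the measure-zero claim via hyperbolicity and stable manifolds. At every point of $\mathcal{C}$ one has $s_K=0$, and near such points $\|z_1\|=\|s_L\|^{2/p-1}$ with $2/p-1\in]0,1[$. So the closed-loop field is not Lipschitz, let alone $C^1$, at the nonidentity equilibria. There is no linearization, and the (center-)stable manifold theorem cannot be invoked.

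The paper instead appeals to the Bohn--Sanyal result: for $p\in]1,2[$ the observer has the same equilibria and regions of attraction as its Lipschitz counterpart with $p=1$. For $p=1$ your idea is legitimate. There $(Q,\Psi)$ is an autonomous subsystem, and a negative eigenvalue $\lambda$ of the Hessian of $\mathcal{U}_r$ at a saddle or at the maximum gives the block $\begin{bmatrix}-\alpha_1\lambda & 1\\ -k_p\lambda & -k_\omega\end{bmatrix}$. Its determinant is $\lambda(\alpha_1k_\omega+k_p)<0$, so there is an unstable direction. You therefore need to route the almost-global claim through that comparison, or prove it yourself. Your finite-time entry into the small sublevel set depends on this step.

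On your first obstacle: the term $(s_L^{\top}s_L)^{1/p-1}H(s_L)w_L$ in \eqref{obs1} has no continuous extension at $s_L=0$ when $w_L\neq0$, so reading it ``by continuous extension'' does not work. Instead take $(Q,\chi,\Psi,\Phi)$ as the state. In these variables the closed loop is your $\dot\Psi,\dot\Phi$ equations together with $\dot Q=Q(\Psi-\alpha_1z_1)^{\times}$ and $\dot\chi=Q\upsilon$, where $\upsilon=\Phi-\omega^{\times}\bar q-\alpha_2z_2$ and $\omega=\Psi-\alpha_1z_1$. This system is H\"older continuous, and $(\omega,\upsilon)\leftrightarrow(\Psi,\Phi)$ is a homeomorphism.
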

	\begin{proof}	
	This result gives the observer equations to estimate the pose and velocities in real time, from the measured quantities $E^m$ and $\xi^m$ in a finite-time stable manner. In the following analysis, all functional dependencies have been dropped for notational convenience. 
	
	Consider the following Morse-Lyapunov function:
\begin{equation}\label{lyapunov}
    \mathcal{V} = k_p\;\mathcal{U} + \dfrac{1}{2}\Psi\T \Psi + \dfrac{1}{2}\Phi\T \Phi,
\end{equation}
where $\cU$ is as defined by eq. \eqref{potential_combined}. Taking the time derivative along the error state trajectories, we get
\begin{align}
    \dot{\mathcal{V}} &= k_p\;\dot{\mathcal{U}} + \Psi\T \dot{\Psi} + \Phi\T \dot{\Phi} \nn \\
    &= k_{p}\left(s_{L}\T \omega  + \kappa \;y\T (\upsilon + \omega^{\times}\bar{q})\right) \label{dot_V} \\ &+ \Psi\T (\dot{\omega} + \alpha_{1}\dot z_1) + \Phi\T (\dot{\upsilon} +\dot\omega^\times\bar{q} + \alpha_{2}\dot z_2), \nn
\end{align}
where $\dot U$ is obtained from Lemma \ref{lemma:ineq} and eq. \eqref{cUdot} $\dot z_1$ and $\dot z_2$ are obtained from time derivatives of \eqref{z1def} and \eqref{z2def}, respectively, and simplified using \eqref{wLdef} and \eqref{Hdef} as:
\begin{align}
    &\dot z_1 = \frac{\di}{\di t}z_{1} = \frac{1}{\big(s{_L}\T 
		s_L\big)^{1-1/p}}H(s_L) w_L \label{dot_z1} \\
&\mbox{and }
    \dot z_2 =\frac{\di}{\di t}z_{2} = \frac{1}{\big(y\T 
		y\big)^{1-1/p}}H(y) v_y.\label{dot_z2}
\end{align}
Additionally, $v_y$ in \eqref{dot_z2} is given by \eqref{vydef} and obtained as follows:
\begin{align*}
v_y = \frac{\di}{\di t}y &=  Q\T \dot{\chi}+ \dot{Q}\T \chi  - \dot{Q}\T \bar{q} \nn \\
&=\upsilon + \omega^{\times}Q\T (\bar{q} -  \chi)\nn \\
&=\upsilon + \omega^{\times}(\bar{q} - y),
\end{align*} 
using the definitions of $\dot{Q}$, $\dot{\chi}$ in \eqref{err_kine}, and $y$ as given in \eqref{y_perfect}.
The expression for $w_L$ in \eqref{dot_z1} given by \eqref{wLdef}, is obtained by applying Lemma \ref{lemma:ineq} and taking the time derivative of $s_K(Q)$ as follows:
\begin{align*}
    w_L &= \frac{\di}{\di t} s_L(\wh R)= \frac{\di}{\di t} s_K(Q) \\
    &= \frac{\di}{\di t}\Big( \vex(KQ-Q\T K) \Big) \\
    &= \vex\big(KQ\omega^\times+ \omega^\times Q\T K\big)= \vex(L \wh{R}\T \omega^{\times} + \omega^{\times}\wh{R}L\T),
\end{align*}
using eq. \eqref{KLrelation} from Lemma \ref{lemma:ineq} in the last step.
%
Now, substituting the equations \eqref{obs1}, \eqref{obs2}, \eqref{dot_z1} and \eqref{dot_z2} into equation \eqref{dot_V} we get
\begin{align}
    \dot{\mathcal{V}} &= k_{p}\left(s_{L}\T \omega  + \kappa\; y\T (\upsilon +   \omega^{\times}\bar{q}) \right) \nonumber \\ 
    &\quad + \Psi\T \biggl(-k_p s_L 
    - k_{\omega}\frac{\Psi}{\big(\Psi\T\Psi\big)^{1-1/p}} \biggl)  \nonumber \\
    &\quad  + \Phi\T \biggl(- k_p\; \kappa\; y  - k_{\upsilon}\frac{\Phi}{\big(\Phi\T\Phi\big)^{1-1/p}}  \biggl) \nonumber \\
    &=  -k_p s_L\T (\Psi - \omega) -  k_p\; \kappa \;y\T (\Phi - \upsilon- \omega^\times\bar{q}) \nonumber \\
    &\quad - k_{\omega}\left(  \Psi\T \Psi \right)^{1/p} - k_{\upsilon}\left(  \Phi\T \Phi \right)^{1/p} \nn \\ 
    &= - k_p \;\alpha_{1}\;s_{L}{\T} z_{1} -  k_p\;\kappa\;\alpha_{2}\; y\T z_{2}\nonumber \\ 
    &\quad - k_{\omega}\left(  \Psi\T \Psi \right)^{1/p} - k_{\upsilon}\left(  \Phi\T \Phi \right)^{1/p} \nn \\
    &= - k_p \left(\alpha_{1}(s_{L}{\T} s_{L})^{1/p} +  \kappa\;\alpha_{2}(y\T y)^{1/p}\right) \nonumber \\
    &\quad - k_{\omega}\left(  \Psi\T \Psi \right)^{1/p} - k_{\upsilon}\left(  \Phi\T \Phi \right)^{1/p}. \label{v_dot_subs}
\end{align}
This time derivative of the Morse-Lyapunov function is negative, except when $s_L$, $y$, $\Psi$ and $\Phi$ are all zero, in which case $\dot\cV=0$. Note that $s_L (\wh R, D, E^m)=s_K(Q)$ according to Lemma \ref{lemma:ineq}. 
Denote the initial estimation errors by $(h_0,\varphi_0) = (Q_0,\chi_0,\omega_0,\upsilon_0)\in\Ta\SE$. Let $S_K(Q_0)$, $\chi_0$, $\omega_0$ and $\upsilon_0$ be not all zero, so that $\dot\cV<0$ initially. Define the 
set $\cX= \{Q\in\cS, \chi\in\bR^3, \omega\in\bR^3, \upsilon\in\bR^3\} \subset\Ta\SE$, where $\cS\subset\SO$ is the closed (hence compact) set containing the identity $I\in\SO$ defined by eq. \eqref{cSdefn} in 
Lemma \ref{SKQlemma}. Now we consider two cases: (1) the initial estimation errors $(Q_0,y_0,\omega_0,\upsilon_0) \notin\cX$; and (2) the initial estimation errors are in the set $\cX$. 

In the first case, if $\cV_0$ is the finite, non-zero initial value of the Morse-Lyapunov function, then this value will decrease until the estimation errors reach the boundary of the set $\cX$ in a 
finite amount of time. When the estimation error $Q$ reaches the neighborhood $\cS$, Lemma \ref{SKQlemma} 
can be applied to obtain the following inequality: 
\begin{equation}\label{sL_U}
    -s_{L}{\T} s_{L} \leq - \mathcal{U}_{r}\left( \wh{g}, D, E^{m}  \right) = - \left\langle K,I-Q\right\rangle.
\end{equation}
Substituting \eqref{sL_U} and \eqref{potential_translation} in \eqref{v_dot_subs}, we get the following inequality for $\dot \cV$ in terms of $\cU_r$, $\cU_t$, $\Psi$ and $\Phi$:
\begin{align}
    \dot{\mathcal{V}} &\leq - \alpha_{1}\, k_p^{1-1/p} (k_p\mathcal{U}_r)^{1/p} -  2^{1/p}\alpha_{2} (k_p\,\kappa)^{1-1/p}(k_p\mathcal{U}_t)^{1/p} \nonumber \\ 
    &\qquad\qquad\qquad - k_{\upsilon}\left(  \Phi\T \Phi \right)^{1/p} - k_{\omega}\left(  \Psi\T \Psi \right)^{1/p} \label{Vdot1} \\
    &\leq - k_{0}\left \{ \left(k_p\;\mathcal{U} \right)^{1/p} +  \left( \dfrac{1}{2}  \Phi\T \Phi \right)^{1/p} + \left(\dfrac{1}{2}\Psi\T \Psi \right)^{1/p} \right \}, \label{Vdot2}
\end{align}
applying Lemma \ref{binom} to the first two terms in the LHS of inequality \eqref{Vdot1}, where
\[k_{0} = \min\left(\alpha_{1}\, k_p^{1-1/p},\, 2^{1/p}\alpha_2(k_p\; \kappa)^{1-1/p},\,2^{1/p}k_{\upsilon},\, 2^{1/p}k_{\omega}  \right),\] and $\mathcal{U}$ is defined in \eqref{potential_combined} as the total potential function.
After applying Lemma \ref{binom} again to the above inequality \eqref{Vdot2}, we obtain:
\begin{align}
    \dot{\mathcal{V}} \leq - k_{0}\left( k_p\;\mathcal{U} + 
 \dfrac{1}{2}\Phi\T \Phi +  \dfrac{1}{2}\Psi\T \Psi \right)^{1/p}  
 \leq - k_{0}\mathcal{V}^{1/p}.\label{fts_Lyp}
\end{align}
Applying inequality \eqref{fts_Lyp}, if $\cV^\cS<\cV_0$ is the value of the Morse-Lyapunov function when the estimation errors reach the boundary of set $\cX$, then the time taken to reach this set is upper bounded by 
\[ T^\cS_0 = \frac{\cV_0-\cV^\cS}{k_0(\cV^\cS)^{1/p}}.  \]
From inequality \eqref{fts_Lyp}, we also conclude that $\cV$ converges to zero in another finite period of time once the attitude estimation error $Q$ reaches the set $\cS$ defined by Lemma \ref{SKQlemma}, and this settling 
time is upper-bounded by $t_s=\frac{1}{k_0}(\cV^S)^{1-1/p}$. Therefore, the total time for the estimation errors to converge to $(h,\varphi)=(I,0)\in\Ta\SE$ is $T^\cS_0 +t_s$ in the first case when the initial 
estimation errors are outside set $\cX$. The upper bound on the settling time for the second case when initial errors $(h_0,\varphi_0)\in\cX$ is straightforward to obtain directly from inequality \eqref{fts_Lyp}, and is
given by $t_s'= \frac{1}{k_0}(\cV_0)^{1-1/p}$.

The set where $\dot{\mathcal{V}} = 0$ is given by:
\begin{align}
     \dot{\mathcal{V}}^{-1}(0) &= \{(h,\varphi) \; : \; s_{K}(Q) = 0,\, y = 0,\, \Phi = 0, \, \Psi = 0\} \nonumber \\
     &= \{ (h,\varphi) \; : \; s_{K}(Q) = 0,\, Q\T \chi = 0,\, \upsilon = 0,  \, \omega = 0 \} \nonumber \\
     &= \{ (h,\varphi) \; : \; s_{K}(Q) = 0,\, \chi = 0,\, \varphi = 0 \}.
\end{align}
From Lemma \ref{costU}, $s_{K}(Q) = 0$ when $Q \in \mathcal{C}$, where $\mathcal{C}$ is the set of critical points defined in \eqref{eq:Qset}. Using the invariance-like result of theorem 8.4 in \cite{khalil}, we can 
conclude that as $t$ approaches the settling time, $(h,\varphi)$ converges to the set:
\begin{equation}
    Y = \{ (h,\varphi) \; : \; s_{K}(Q) = 0,\, \chi = 0,\, \varphi = 0 \}= \cV^{-1}(0),
\end{equation}
which is also the largest invariant set for the autonomous system of state estimation errors $(h,\varphi)\in\Ta\SE$. In the absence of measurement errors, the attitude estimation error converges to the critical set $\mathcal{C}$ of $\cU_r (Q)= \lan K, I-Q\ran$, 
which has $Q=I$ as the minimum, while the position and velocity estimation errors converge to zero.


The resulting closed-loop system of state (pose and velocity) estimation errors is H\"{o}lder-continuous with H\"{o}lder exponent $\sfrac{1}{p} < 1$. In the limiting case
of $\sfrac{1}{p} = 1$, the feedback system is Lipschitz-continuous. Based on the analysis carried out in \cite{Bohn2014observer,SanyalBohn2013}, it can be concluded that the equilibria and regions of 
attraction of the H\"{o}lder-continuous finite-time stable (FTS) observer with $p \in (1,2)$ are equivalent to those of the corresponding Lipschitz-continuous asymptotically stable observer with $p = 1$. 
As a result, the observer given by eqs. \eqref{obs1}-\eqref{obs4} is almost globally finite-time stable (AGFTS) for the state estimation errors $(h,\varphi) = (I,0)$.
\end{proof}

\textbf{Note:} An easier-to-implement version of this FTS pose estimator is given by Proposition \ref{implement-prop}, which considers the case that translational velocities are not 
measured directly. In this version, filtered point cloud and angular velocity vector measurements are used to estimate all states in $\Ta\SE$.


\section{Robustness and Pose Estimation without Translational Velocity Measurements} \label{sec: robustness}
\subsection{Robustness Analysis}
Theorem \ref{FTSestwbias} gives the convergence of almost all initial state estimation errors $(h,\varphi)$ to $(I,0)$ in a finite-time stable manner in the absence of measurement noise. In 
the presence of bounded additive measurement noise $\delta = [\delta_{\Omega},\; \delta_{\nu}]\T$ in the measurement of angular and translational velocities, the estimation errors 
will converge to a bounded neighborhood of $(h,\varphi) = (I,0)$, as the following result shows. Given the definition of $\varphi$ in \eqref{varphi}, the estimation errors in 
velocities in the presence of such measurement noise are given by:
\begin{align}
    \widetilde{\varphi}(\widehat{\mathrm{g}},\xi^m,\widehat{\xi})  =  \begin{bmatrix}
\widetilde{\omega}  \\
\widetilde{\upsilon}
 \end{bmatrix}
 &= \Ad{\widehat{\mathrm{g}}}(\xi^m - \widehat{\xi}) \nonumber \\
 &= \Ad{\widehat{\mathrm{g}}}(\xi + \delta - \widehat{\xi}) \nonumber \\
 &= \varphi(\widehat{\mathrm{g}},\xi,\widehat{\xi}) + \zeta(\widehat{\mathrm{g}},\delta),\label{varphi_noise}
\end{align}
where $\widetilde{\omega}$ and $\widetilde{\upsilon}$ are angular and translational velocity estimation errors, respectively, $\xi^m$ is given by \eqref{def_xi_m}, and
\begin{equation}
    \zeta(\widehat{\mathrm{g}},\delta) =  \begin{bmatrix}
\zeta_{\omega}  \\
\zeta_{\upsilon}
 \end{bmatrix} \in \bR^{6}
\end{equation}
represents the effect of noise in velocity measurements. The following result shows the robustness property of the pose estimator to the bounded measurement noise.
\begin{corollary}\label{corr1}
Consider the pose observer given by  \eqref{obs1}-\eqref{obs4} and measured angular and translational velocities given by \eqref{def_xi_m}. 
Let the additive noise signals in velocity measurements 
and the centroid of measured vectors in inertial frame be bounded as: 
\be 
\|\zeta_\omega\| \le \epsilon_\omega, \, 
\|\zeta_\upsilon\| \le \epsilon_\upsilon, \, \mbox{ and }\, \|\bar{q}\| \le \bar{q}_{\max}. \label{noise_barq_bnds}
\ee 
Let $\mathcal{N} \subset \mathcal{H}\times\bR^{6}\subset\SE\times\bR^6$ be the closed neighborhood of $(I,0)$ of state estimation errors defined by:
\vspace{-2mm}
\begin{multline}\label{neighborhood}
    \mathcal{N} = \{ (h,\varphi) \in \TSE: \norm{s_{L}}\leq s_{L_{\max}}, \norm{y}\leq y_{\max}, \\ 
    \norm{\Phi}\leq \Phi_{\max} \;\mbox{and}\;\norm{\Psi}\leq \Psi_{\max}\},
\end{multline}
where $\mathcal{H} = \{h \in \SE: Q \in \mathcal{S}, \chi \in \bR^3\}$ and $\cS\subset\SO$ is as defined by eq. \eqref{cSdefn} in Lemma \ref{SKQlemma}.
Select the positive observer gains such that they 
satisfy the inequality:
\vspace{-1mm}
\begin{equation}\label{robustness_ineq}
    \dfrac{\alpha_{\min}}{k_{\min}} \geq \dfrac{\Lambda}{s_{L_{\max}}^{2/p} + y_{\max}^{2/p}}, \mbox{ where } k_{\min} = \min\{k_{\upsilon}, k_{\omega}\}, 
\end{equation}
$\Lambda =\Big(\frac{2}{p}\epsilon_\omega- \Psi_{\max}\Big)\Psi^{2/p-1}_{\max} +\Big(\frac{2}{p}\big(\epsilon_\upsilon+ \bar{q}_{\max} \epsilon_{\omega}\big)-\Phi_{\max}\Big)\Phi^{2/p-1}_{\max}$,
$\alpha_{\min} = \min\{\alpha_1 k_p, \alpha_2 k_p\kappa\}$, $\norm{\zeta_{\upsilon}} \leq \epsilon_{\upsilon}$, and $\norm{\zeta_{\omega}} \leq \epsilon_{\omega}$. 
Then the estimation errors $(h,\varphi)$ converge to the bounded neighborhood $\mathcal{N}$, from almost all initial conditions except those in a set of measure zero.
\end{corollary}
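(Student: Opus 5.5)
We reuse the Morse--Lyapunov function $\cV = k_p\,\cU + \tfrac12\Psi\T\Psi + \tfrac12\Phi\T\Phi$ from the proof of Theorem \ref{FTSestwbias}. We recompute $\dot\cV$ along the noisy error trajectories, in which the observer \eqref{obs1}--\eqref{obs4} is driven by $\xi^m=\xi+\delta$. By \eqref{varphi_noise}, the velocity errors that enter the kinematics differ from the observer's internal errors by $\zeta$. Concretely, the attitude and position error kinematics \eqref{err_kine} pick up $\omega\to\omega-\zeta_\omega$ and $\upsilon\to\upsilon-\zeta_\upsilon$, with the sign fixed by \eqref{obs3}. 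Hence $\dot\cU$ in \eqref{cUdot}, $w_L$ in \eqref{wLdef} and $v_y$ in \eqref{vydef} all acquire additive perturbation terms linear in $\zeta$. Substituting \eqref{obs1}--\eqref{obs2} as before, the exact cancellations of the noise-free proof still occur. The result is
\begin{align*}
\dot\cV = &-k_p\bigl(\alpha_1 (s_L\T s_L)^{1/p} + \kappa\alpha_2 (y\T y)^{1/p}\bigr) - k_\omega(\Psi\T\Psi)^{1/p} - k_\upsilon(\Phi\T\Phi)^{1/p} \\
&+ \Psi\T \Delta_\Psi + \Phi\T\Delta_\Phi + (\text{cross terms } s_L\T\zeta_\omega,\ y\T(\zeta_\upsilon+\zeta_\omega^\times\bar q)),
\end{align*}
where $\Delta_\Psi,\Delta_\Phi$ collect the noise contributions through $\dot z_1,\dot z_2$.

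The next step is to absorb the cross terms into the $\Psi,\Phi$ terms. We regroup using the definitions $\omega = \Psi-\alpha_1 z_1$ and $\upsilon+\omega^\times\bar q = \Phi-\alpha_2 z_2$. The potential-derivative contributions $k_p s_L\T\zeta_\omega$ and $k_p\kappa\, y\T(\zeta_\upsilon+\bar q^\times \zeta_\omega)$, together with the $H(\cdot)$-weighted terms, are then bounded using $\|H(x)\|\le 1$ (its eigenvalues are $1$ and $2/p-1$) and the bounds \eqref{noise_barq_bnds}. The aim is an estimate of the form $|\Psi\T\Delta_\Psi|\le \tfrac{2}{p}\epsilon_\omega\|\Psi\|^{2/p-1}\cdot(\cdot)$ on the boundary of $\cN$. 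This mirrors the structure of $\Lambda$, where the factor $2/p$ arises from differentiating $(\cdot)^{1/p}$-type expressions and the exponent $2/p-1$ from $\|z\|=\|x\|^{2/p-1}$. In this way we will show that on $\partial\cN$, where $\|s_L\|=s_{L_{\max}}$, $\|y\|=y_{\max}$, $\|\Psi\|=\Psi_{\max}$ and $\|\Phi\|=\Phi_{\max}$, one has
\[
\dot\cV \le -\alpha_{\min}\bigl(s_{L_{\max}}^{2/p}+y_{\max}^{2/p}\bigr) + k_{\min}\Lambda ,
\]
which is $\le 0$ precisely under the gain condition \eqref{robustness_ineq}. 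The case analysis splits on which constraint of $\cN$ is active, using monotonicity of the negative-definite terms in each norm.

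Given $\dot\cV<0$ outside $\cN$, sublevel sets of $\cV$ containing $\cN$ are forward invariant. Trajectories therefore enter and stay in $\cN$ in the sense of uniform ultimate boundedness. Finite-time entry follows from a bound $\dot\cV\le -k_0\cV^{1/p}+c$ obtained as in \eqref{Vdot1}--\eqref{fts_Lyp}, via Lemma \ref{SKQlemma} once $Q\in\cS$ and Lemma \ref{binom}. For initial errors with $Q_0\notin\cS$, we argue as in the first case of Theorem \ref{FTSestwbias}: the noise-free almost-global attraction persists under small bounded perturbations, except for the measure-zero stable manifolds of the unstable critical points in $\cC$. This carries over the ``almost all initial conditions'' qualifier.

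The main obstacle is the second step: bounding the noise terms entering through $\dot z_1,\dot z_2$ and the potential derivative so that they are controlled exactly by $\Lambda$ on the boundary of $\cN$. The cross terms $s_L\T\zeta_\omega$ and $y\T\zeta_\upsilon$ do not obviously factor through $\Psi$ and $\Phi$. I would first try writing every $\zeta$-dependent term as an inner product with $\Psi$ or $\Phi$ using the observer structure, and otherwise absorb the leftover pieces into the $\alpha_{\min}$ terms. This may require the conservative choice of evaluating all norms at their maxima on $\partial\cN$.
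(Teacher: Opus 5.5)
\textbf{There is a genuine gap in your second step.} You flag it as the main obstacle, and it is in fact fatal: with the bookkeeping you set up, the bound you aim for cannot hold. In your labelling, $\Psi,\Phi$ are built from the observer's internal velocity errors, and the kinematics are driven by $\omega-\zeta_\omega$ and $\upsilon-\zeta_\upsilon$. The noise-induced terms in $\dot{\mathcal{V}}$ are then:
\begin{itemize}
\item $-k_p s_L^{\top}\zeta_\omega$;
\item $-k_p\kappa\, y^{\top}(\zeta_\upsilon-\bar{q}^{\times}\zeta_\omega)$;
\item the $\dot z_1$-mismatch term $-\alpha_1 (s_L^{\top}s_L)^{1/p-1}\Psi^{\top}H(s_L)\,w_L(\zeta_\omega)$, with $w_L$ linear in its velocity argument;
\item a translational analogue of the last term coming from $v_y$.
\end{itemize}
These carry the gains $k_p,\alpha_1,\alpha_2$ rather than $k_\omega,k_\upsilon$, and none scales like $\|\Psi\|^{2/p-1}$ or $\|\Phi\|^{2/p-1}$.

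Worse, the factor $(s_L^{\top}s_L)^{1/p-1}$ is unbounded as $s_L\to 0$. Meanwhile $H(s_L)$ is positive definite and $w_L$ is invertible near $Q=I$ (at $Q=I$ it is $\zeta\mapsto(\mathrm{tr}(K)I-K)\zeta$). So a noise signal with $\|\zeta_\omega\|=\epsilon_\omega$ can make this term positive and arbitrarily large at points of $\partial\mathcal{N}$ where $\|\Psi\|=\Psi_{\max}$ and $\|s_L\|$ is small. Evaluating all norms at their maxima does not help, because this factor decreases in $\|s_L\|$. Along your route, therefore, no gains satisfying \eqref{robustness_ineq} give $\dot{\mathcal{V}}\le 0$ on all of $\partial\mathcal{N}$, and $\Lambda$ cannot emerge. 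Your guess about the source of $\tfrac{2}{p}$ and the exponent $\tfrac{2}{p}-1$ (differentiating $z_1,z_2$) is also not where they come from.

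\textbf{How the paper gets $\Lambda$.} The paper never estimates these terms. Its $\widetilde{\mathcal{V}}$ in \eqref{modLyapf} is your $\mathcal{V}$: $\widetilde\Psi=\Psi+\zeta_\omega$ and $\widetilde\Phi=\Phi+\zeta_\upsilon-\bar{q}^{\times}\zeta_\omega$ are the observer's internal quantities. However, $\mathcal{N}$ is measured with the true $\Psi,\Phi$. The paper takes $\dot{\widetilde{\mathcal{V}}}$ to be \eqref{v_dot_subs} with tildes, i.e.\ \eqref{dot_V_noise}, with no cross terms at all. Noise then enters only through
\[
-k_\omega\|\Psi+\zeta_\omega\|^{2/p}\le -k_\omega\|\Psi\|^{2/p}+\tfrac{2}{p}k_\omega\|\zeta_\omega\|\,\|\Psi\|^{2/p-1},
\]
obtained from the reverse triangle inequality and Bernoulli's inequality $(1-x)^{2/p}\ge 1-\tfrac{2}{p}x$. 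There is an analogous bound for $\Phi$ with $\|\zeta_\upsilon\|+\bar{q}_{\max}\|\zeta_\omega\|$. This is exactly where the $\tfrac{2}{p}\epsilon$ factors, the powers $\Psi_{\max}^{2/p-1}$ and $\Phi_{\max}^{2/p-1}$, and $k_{\min}$ in \eqref{robustness_ineq} come from. Evaluating at the maxima on $\partial\mathcal{N}$ then gives the condition in one line. To reproduce the paper, you would adopt \eqref{dot_V_noise} and perform the shift between internal and true $\Psi,\Phi$ inside the $-k(\cdot)^{1/p}$ terms.

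\textbf{A caveat about the paper's route.} The paper justifies \eqref{dot_V_noise} only by ``the same analysis'' as the noise-free case. It drops precisely the cross terms you identified, which are genuinely present because $\mathcal{U}$, $z_1$ and $z_2$ evolve under the true error velocities. So with your (correct) bookkeeping, closing the argument would need an extra hypothesis or a different argument, not just a sharper estimate. Your final steps are also asserted rather than derived: finite-time entry from $\dot{\mathcal{V}}\le -k_0\mathcal{V}^{1/p}+c$, and persistence of almost-global attraction under perturbation. That said, the paper's conclusion likewise rests only on the sign of $\dot{\widetilde{\mathcal{V}}}$ on $\partial\mathcal{N}$.
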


\begin{proof}
    The robustness analysis for this corollary is based on the Lyapunov analysis in Theorem \ref{FTSestwbias}. 
    Consider the measured velocities and velocities estimation error in the presence of measurement noise given in \eqref{def_xi_m} and \eqref{varphi_noise}, respectively. Substituting, $\varphi$ from \eqref{varphi} into \eqref{varphi_noise}, we obtain
    \vspace{-1mm}
    \be
        \widetilde{\omega} = \omega + \zeta_{\omega}, \;\mbox{and}\; \widetilde{\upsilon} = \upsilon + \zeta_{\upsilon}. \label{noisyvelerrs}
    \ee
As the quantities $\Psi$ and $\Phi$ defined by eqs. \eqref{Psidef} and \eqref{Phidef} respectively, are linear in the velocity estimation errors, substituting eq. \eqref{noisyvelerrs} into these defining 
equations in the presence of measurement noise gives: 
\begin{align}
        \widetilde\Psi= \Psi + \zeta_\omega \mbox{ and }
        \widetilde\Phi= \Phi +\zeta_\upsilon -\bar{q}^\times\zeta_\omega, \label{tildePhiPsidefs}
\end{align}
as the modified versions of these quantities. In the presence of measurement noise, $\widetilde\omega$, $\widetilde\upsilon$, $\wt\Psi$ and $\wt\Phi$ will replace $\omega$, $\upsilon$, $\Psi$ and $\Phi$ respectively, 
in the observer equations \eqref{Psidef}-\eqref{obs2}. These substitutions are made to the Morse-Lyapunov function defined by eq. \eqref{lyapunov} and the observer equations; the modified Morse-Lyapunov function is denoted: 
\be \wt\cV=  k_p\;\mathcal{U} + \dfrac{1}{2}\wt\Psi\T \wt\Psi + \dfrac{1}{2}\wt\Phi\T \wt\Phi.
\label{modLyapf} \ee
Proceeding with the same analysis as in the first part of the proof of Theorem \ref{FTSestwbias}, the time derivative of $\wt\cV(\cdots)$ is obtained as a modification of eq. \eqref{v_dot_subs}, as follows: 
\begin{align}
    \dot{\wt{\cV}} &= - k_p \left(\alpha_{1}(s_{L}{\T} s_{L})^{1/p} +  \kappa\;\alpha_{2}(y\T y)^{1/p}\right) \nonumber \\
    &\quad - k_{\omega}\left( \wt\Psi\T \wt\Psi \right)^{1/p} - k_{\upsilon}\left( \wt\Phi\T \wt\Phi \right)^{1/p}.
    \label{dot_V_noise}
\end{align}
The difference between \eqref{dot_V_noise} and \eqref{v_dot_subs} is due to measurement noise in the velocity measurements. 
Evaluating the last two terms on the RHS of eq. \eqref{dot_V_noise}, we get:
\begin{align}
\begin{split}
    -\left( \wt\Psi\T \wt\Psi \right)^{1/p} &= -\|\Psi+ \zeta_\omega\|^{2/p} 
    \le -\Big(\|\Psi\|- \|\zeta_\omega\|\Big)^{2/p}, \\
    \mbox{and } &-\left( \wt\Phi\T \wt\Phi \right)^{1/p} = -\|\Phi +\zeta_\upsilon -\bar{q}^\times\zeta_\omega\|^{2/p} \\ \quad\quad &\le -\Big( \|\Phi\| - \|\zeta_\upsilon -\bar{q}^\times\zeta_\omega\|\Big)^{2/p}
\end{split} \label{mod-PsiPhi-ineqs}    
\end{align}
Now, considering the first inequality in \eqref{mod-PsiPhi-ineqs} and applying the binomial expansion to the RHS, we obtain:
\begin{align}
     &-\left( \wt\Psi\T \wt\Psi \right)^{1/p} \le 
    -\|\Psi\|^{2/p} \left( 1-\norm{\frac{\zeta_\omega}{\Psi}} \right)^{2/p} \nn \\
    &= -\|\Psi\|^{2/p} \left( 1- \frac{2}{p}\norm{\frac{\zeta_\omega}{\Psi}}+ \frac{\frac{2}{p}\big(\frac{2}{p}-1\big)}{2!}\norm{\frac{\zeta_\omega}{\Psi}}^2 +\ldots \right) \label{modPsi-ineq} \\
    &\le -\|\Psi\|^{2/p} \left( 1- \frac{2}{p}\norm{\frac{\zeta_\omega}{\Psi}} \right) \mbox{ as } \frac{2}{p}\in ]1,2[ \mbox{ if } p\in ]1,2[. \nn
\end{align}
The series expansion in the second line of \eqref{modPsi-ineq} converges if $\norm{\frac{\zeta_\omega}{\Psi}}<1$, which would be the case initially if the initial state estimation errors are larger in magnitude than 
the angular velocity measurement noise. But that is not required for the inequality in the last line of \eqref{modPsi-ineq} to hold, because all terms in parentheses in the second line from the 
third term onward, are positive. A similar simplified inequality can be obtained from the second of inequalities \eqref{mod-PsiPhi-ineqs}, as follows: 
\begin{align}
    -\left( \wt\Phi\T \wt\Phi \right)^{1/p} &\le -\|\Phi\|^{2/p} \left( 1- \frac{2}{p}\norm{\frac{\zeta_\upsilon - \bar{q}^\times\zeta_\omega}{\Phi}} \right) \nn \\ &\le -\|\Phi\|^{2/p} \left( 1- \frac{2}{p} \frac{\|\zeta_\upsilon\| + \| \bar{q}^\times\zeta_\omega\|}{\|\Phi\|} \right).  
    \label{modPhi-ineq}
\end{align}
Substituting inequalities \eqref{modPsi-ineq}-\eqref{modPhi-ineq} into the expression \eqref{dot_V_noise} for the time derivative of the modified Morse-Lyapunov function, we get:
\begin{align}
    \dot{\wt{\cV}} \le& -\alpha_1 k_p \|s_L\|^{2/p} -\alpha_2 k_p\kappa \|y\|^{2/p} 
    -k_\omega \|\Psi\|^{2/p} \left( 1- \right. \nn \\ 
    &\left. \frac{2}{p}\norm{\frac{\zeta_\omega}{\Psi}} \right) -k_\upsilon \|\Phi\|^{2/p} \left( 1- \frac{2}{p} \frac{\|\zeta_\upsilon\| + \| \bar{q}^\times\zeta_\omega\|}{\|\Phi\|} \right). \label{Vdot_noise}
\end{align}


From \eqref{Vdot_noise}, the upper bound on $\dot{\mathcal{V}}$ along the boundary of the neighborhood $\mathcal{N}\ni (I,0)$ as defined in \eqref{neighborhood} with noise as bounded by \eqref{noise_barq_bnds}, is given by
\begin{align*}
\begin{split}
    \dot{\wt{\mathcal{V}}} \leq & -\alpha_{\min}\left( s_{L_{\max}}^{2/p} + y^{2/p}_{\max} \right) -k_{\min} \Psi^{2/p-1}_{\max}\Big(\Psi_{\max} -\frac{2}{p}\epsilon_\omega\Big) \nn \\ & -k_{\min} \Phi^{2/p-1}_{\max}\Big(\Phi_{\max} -\frac{2}{p}\big(\epsilon_\upsilon+ \bar{q}_{\max} \epsilon_{\omega}\big)\Big). 
    \end{split}
\end{align*}
where $\alpha_{\min} = \min\{\alpha_1 k_p, \alpha_2 k_p\kappa\}$. 
A sufficient condition for the convergence of all trajectories starting outside of the neighborhood $\mathcal{N}$ of $(I,0)$ to this neighborhood, is given by the RHS of this inequality being upper bounded 
by zero. This leads to the expression
\begin{align}
    \begin{split}\label{condition1}
        &\dfrac{\alpha_{\min}}{k_{\min}} \geq \dfrac{\Psi^{2/p-1}_{\max}\rho_\omega + \Phi^{2/p-1}_{\max}\rho_\upsilon}{s_{L_{\max}}^{2/p} + y_{\max}^{2/p}}, \\
        &\mbox{where } \rho_\omega= \frac{2}{p}\epsilon_\omega- \Psi_{\max} \mbox{ and } \\ &\rho_\upsilon= \frac{2}{p}\big(\epsilon_\upsilon+ \bar{q}_{\max} \epsilon_{\omega}\big)-\Phi_{\max}. 
    \end{split}    
\end{align}
Inequality \eqref{condition1} leads to \eqref{robustness_ineq}, and relates the ratio $\sfrac{\alpha_{\min}}{k_{\min}}$ and known bounds on measurement noises  $(\zeta_\upsilon,\zeta_\omega)$ to the desired 
neighborhood for convergence of state estimation errors $\mathcal{N}$. 
Its satisfaction guarantees the convergence of state estimation errors $(h,\varphi)\in\Ta\SE$ from almost all initial conditions to the neighborhood $\mathcal{N}$ of $(I,0)$ given by \eqref{neighborhood}.
\end{proof}

\begin{remark}
    Bounds on state estimation errors $(h,\varphi)$ lead to the bounds $s_{L_{\max}}, y_{\max}, \Phi_{\max} \mbox{ and } \Psi_{\max}$, while  $\bar{q}_{\max}$ is a bound on the sensor's position measurement. 
\end{remark}

\subsection{FTS Pose Estimator Implementation without Translational Velocity Measurements}
The FTS pose estimator is implemented for the case when only angular velocity measurements are available along with position measurements of the feature points. This is a common case of practical interest because rate gyros in inertial measurement units can directly measure angular velocity but translational velocity is not 
directly measured by sensors onboard aerial vehicles. To implement the FTS pose estimator in this case, a finite-time stable filter is used to 
obtain filtered values of the measured feature points and measured angular velocity. The filtered feature points are used to construct a filtered 
estimate of the translational velocity. The filtered versions of the translational and angular velocity vectors are used in this implementation 
of the FTS pose estimator, as given in the following proposition. \\
\begin{proposition}\label{implement-prop}
    Consider the case where position vectors of feature points in 3D point clouds and the angular velocity vector are measured by onboard sensors, but translational velocity is not measured. 
    Let $z^{f}$ denote the filtered value of the measured vector $z^{m}$ at time $t$ ($z^m$ could stand for $\Omega^m$ or $a_i^m$). 
Construct the filtered translational velocity and filtered mean of the measured feature points, as follows:
\begin{align} 
&\nu^{f} = \dfrac{1}{j} \sum^{j}_{i=1} \big((a^{f}_{i})^{\times}\Omega^{f} -  \mathit{v}^{f}_{i}\big), \mbox{ where } v_i^f=\dot a_i^f, \label{nufdef} \\
&\bar{a}^f= \dfrac{1}{j} \sum^{j}_{i=1} a_i^f, \dot{\bar{a}}^f= \dfrac{1}{j} \sum^{j}_{i=1}\dot a_i^f, \mbox{ and } \xi^f= \bbm \Omega^f \\ \nu^f\ebm. \label{barafdef}
\end{align} 
Thereafter, replace the measured quantities $\bar{a}^{m}$ 
and $\xi^m$ in equations \eqref{z2def}, \eqref{Phidef}, and \eqref{obs3} in Theorem \ref{FTSestwbias} by the filtered quantities 
$\bar{a}^{f}$ 
and $\xi^f$, respectively. This gives the implementation of the FTS pose estimator when translational velocity is not directly measured.
\end{proposition}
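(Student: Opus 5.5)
The core of the argument is the rigid-body kinematic identity that expresses the body-frame translational velocity in terms of the body-frame positions of stationary feature points and the angular velocity. I will derive it exactly for noise-free quantities, then argue that replacing them by finite-time stable filtered signals preserves it after a finite time, with residual errors entering Corollary \ref{corr1} as bounded noise.

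\textbf{Step 1: the exact identity.} For a stationary point, eq. \eqref{position_vec} gives $q_i = R a_i + b$ with $q_i$ constant, so differentiating and using $\dot R = R\Omega^\times$ and $\dot b = R\nu$ from \eqref{def_g} yields
\[
0 = R\Omega^\times a_i + R\dot a_i + R\nu .
\]
Since $R$ is invertible, this gives $\nu = -\Omega^\times a_i - \dot a_i = a_i^\times\Omega - \dot a_i$ for every $i$. Averaging over the $j$ points gives
\[
\nu = \frac{1}{j}\sum_{i=1}^j\big(a_i^\times\Omega - \dot a_i\big),
\]
which is exactly \eqref{nufdef} with the true signals in place of the filtered ones. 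Likewise, $\bar a = R\T(\bar q - b)$ is the noise-free version of $\bar a^m$, so $\bar a^f$ in \eqref{barafdef} is the natural substitute in $y$ of \eqref{z2def}.

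\textbf{Step 2: filter convergence.} Take the finite-time stable filter referred to before the proposition, for example a H\"older-continuous second-order differentiator of the same type as \eqref{obs1}. It provides $a_i^f \to a_i$, $v_i^f = \dot a_i^f \to \dot a_i$ and $\Omega^f \to \Omega$ in finite time $T_f$ in the noise-free case. In the noisy case these errors are ultimately bounded by a function of the noise bounds. Consequently $\xi^f = \xi$ and $\bar a^f = \bar a$ for $t \ge T_f$. After $T_f$, the observer \eqref{obs1}--\eqref{obs4} with the filtered quantities substituted coincides with the one in Theorem \ref{FTSestwbias}, so the convergence conclusion of that theorem applies from the state reached at $T_f$.

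\textbf{Step 3: the obstacle and the noisy case.} The main obstacle is the interval $[0,T_f]$ and the noisy case. On $[0,T_f]$ the observer is driven by incorrect velocities, and I must rule out finite escape. The approach I would take first is to note that $\xi^f - \xi$ is bounded during this interval. Writing $\zeta = \Ad{\wh g}(\xi^f - \xi)$, the bound on $\dot{\wt\cV}$ from \eqref{Vdot_noise} shows that $\wt\cV$ grows at most linearly, so it remains finite at $T_f$. The point to check is that $\Ad{\wh g}$ involves $\wh b$, which forces me to bound $\wh b$ jointly with $\wt\cV$ by a Gronwall-type argument. For noisy measurements, the product $a_i^{f\times}\Omega^f$ gives a bounded perturbation $\zeta_\upsilon$ bounded in terms of $\|a_i\|$, $\|\Omega\|$ and the filter errors, and similarly for $\zeta_\omega$. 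Corollary \ref{corr1} then gives convergence to the neighborhood $\mathcal N$, with an additional bounded perturbation in $y$ coming from $\bar a^f - \bar a$ that can be absorbed by enlarging $y_{\max}$.
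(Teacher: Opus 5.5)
Your Step 1 is exactly the paper's proof. You differentiate $q_i = Ra_i + b$ using $\dot R = R\Omega^{\times}$ and $\dot b = R\nu$, cancel $R$ to get $\dot a_i = a_i^{\times}\Omega - \nu$ (the paper writes this as $v_i = [a_i^{\times}\;\; -I]\,\xi$), average over the $j$ points, and substitute filtered signals for the true ones. The paper stops there. The proposition is a construction and asserts no convergence property of the resulting observer, so Step 1 already proves everything stated.

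Steps 2 and 3 try to prove something stronger, namely that the implementation inherits Theorem \ref{FTSestwbias} or Corollary \ref{corr1}. As written they fail.

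\textbf{Step 2.} Take a H\"older-continuous second-order differentiator for $(a_i^f, v_i^f)$ whose correction terms are continuous in the innovation $a_i^f - a_i$ and vanish with it. Such a filter cannot become exact in finite time for a general trajectory. On any interval where $a_i^f \equiv a_i$, you get $\dot v_i^f = 0$ and $\dot a_i = \dot a_i^f = v_i^f$, which forces $\ddot a_i \equiv 0$. For three non-collinear points this holds only if $\Omega \equiv 0$ and $\nu$ is constant. Exact finite-time differentiation of such signals needs a discontinuous (Levant-type) differentiator plus a known bound on $\ddot a_i$. So even without noise, $\xi^f - \xi$ does not vanish after any finite time in general. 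The observer never reduces to that of Theorem \ref{FTSestwbias}, and you are always in the perturbed regime.

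\textbf{Step 3.} Your no-escape bound on $[0,T_f]$ and your final appeal to Corollary \ref{corr1} both rest on \eqref{Vdot_noise}, which accounts only for velocity errors. The error $\bar a^f - \bar a$ is not covered by Corollary \ref{corr1} at all. It changes the $y$ used in \eqref{z2def}, \eqref{Phidef} and \eqref{obs2}. In particular, it spoils the feedforward term in \eqref{obs2} that is meant to cancel $\alpha_2\dot z_2$ from \eqref{dot_z2}. The mismatch between the actual derivative of the filtered $y$ and the computed $v_y$ is $-\widehat R(\Omega^{\times}e + \dot e)$ with $e = \bar a^f - \bar a$, which is generically nonzero. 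It is multiplied by $(y\cdot y)^{-(1-1/p)}$, so the resulting term in $\dot{\mathcal V}$ is generically unbounded near $y = 0$. Saying it can be ``absorbed by enlarging $y_{\max}$'' is therefore not a justification.

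A stability statement for this implementation would need its own perturbation analysis. For the proposition as stated, drop Steps 2 and 3.
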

\begin{proof}
Consider the case where translational velocity measurements are unavailable. Rigid body velocities can be computed using inertial and point cloud measurements. To do this, differentiate equation \eqref{position_vec} to obtain the following:
\begin{align}
    \dot{q}_{i} = \dot{R}a_{i} + R\dot{a}_{i} + \dot{b} = R(\Omega^{\times}a_{i} + \dot{a}_{i} + \nu) = 0,
\end{align}
because the $q_i$ are inertially fixed vectors. This leads to
\vspace{-1mm}
\begin{align}\label{vi}
    \mathit{v}_{i} = \dot{a}_{i} = [a^{\times}_{i} \; -I]\xi,
\end{align}
where $[a^{\times}_{i} \; -I]$ has full row rank and $\xi$ is defined in \eqref{def_g}. If angular velocity and point cloud measurements are available, the translational velocity of the rigid body can be obtained by filtering the measured points and their velocities and rewriting \eqref{vi} as
\begin{equation}\label{v_f}
    \nu^{f} = \dfrac{1}{j} \sum^{j}_{i=1} \big((a^{f}_{i})^{\times}\Omega^{f} -  \mathit{v}^{f}_{i}\big),
\end{equation}
averaging over the $i$ measured points.
Therefore, the rigid body's filtered velocities $\xi^{f}$ can be expressed as 
\begin{equation}
    \xi^{f} = \begin{bmatrix}
\Omega^{f}  \\
\dfrac{1}{j} \sum^{j}_{i=1} \big((a^{f}_{i})^{\times}\Omega^{f} -  \mathit{v}^{f}_{i}\big)
 \end{bmatrix}.
\end{equation}
This concludes the proof.
\end{proof}
A finite-time stable filtering scheme for the vector measurements $z^m$ in discrete time, is given by Proposition \ref{filterprop} in section \ref{sec:sim}.

\section{Selected Pose Estimation Schemes for Comparison}\label{sec:comparison}
The proposed FTS pose estimation scheme is compared with two other pose estimation schemes: the variational pose estimator (VPE) and dual quaternion multiplicative extended Kalman filter (DQ-MEKF).

\subsection{Variational Pose Estimatior}
This variational pose estimator in \cite{izadi2016rigid} is obtained by applying the Lagrange–d’Alembert principle from variational mechanics to the Lagrangian obtained from measurement residuals along with a Rayleigh dissipation term linear in the velocity
measurement residuals. The discretized variational pose estimator is given by:
\begin{align}
    &(J\omega_{k})^{\times} = \dfrac{1}{\Delta t} \big( F_{k}\mathcal{J} - \mathcal{J}F\T_{k} \big),\ k\in\bN, \\
    &(M + \Delta t \mathbb{D}_{t})\upsilon_{k+1} = F\T_{k}M\upsilon_{k} \nonumber \\ 
    &\qquad + \Delta t\; \kappa (\hat{b}_{k+1} + \hat{R}_{k+1}\bar{a}^{m}_{k+1} - \bar{q}_{k+1}), \\
    &(J + \Delta t \mathbb{D}_{r})\omega_{k+1} = F\T_{k}J\omega_{k} + \Delta t M\upsilon_{k+1}\times \upsilon_{k+1} \nn \\
    &\qquad + \Delta t\; \kappa \;\bar{q}^{\times}_{k+1} (\hat{b}_{k+1} + \hat{R}_{k+1}\bar{a}^{m}_{k+1})\nn \\
    &\qquad - \Delta t\Phi' \Big( \mathcal{U}^{0}_{r} (\hat{\mathrm{g}}_{k+1}, E^{m}_{k+1}, D_{k+1}) \Big) S_{\Gamma_{k+1}}(\hat{R}_{k+1}), \\
    &\hat{\xi}_{k} = \xi^{m}_{k} - \mbox{Ad}_{{\hat{\mathrm{g}}}^{-1}_{k}}\varphi_{k}, \\
    &\hat{\mathrm{g}}_{k+1} = \hat{\mathrm{g}}_{k} \: \mbox{exp}(\Delta t\hat{\xi}^{\vee}_{k}) \label{gkupdate},
\end{align}
where $F_{k} \in \SO$, $\Delta t$ is the time step, $J, M, \mathbb{D}_{t}, \mathbb{D}_{r} \in \bR^{3\times 3}$  are positive definite matrices, $\mathcal{J} = \dfrac{1}{2}\mbox{trace}[J]I - J$, 
$\left(\hat{\mathrm{g}}(t_{0}), \hat{\mathrm{\xi}}(t_{0})\right) = (\hat{\mathrm{g}}_{0}, \hat{\mathrm{\xi}}_{0})$, $\varphi_{k} = [\omega_{k}\T \; \upsilon_{k}\T]\T$. The variables 
$\xi^{m}_{k}, \bar{a}^{m}_{k}, \bar{q}_{k}, E^{m}_{k}, D_{k}$ are the discrete-time values of the quantities defined in Section \ref{sec:Problem}. Like the continuous time FTS-PE of Theorem \ref{FTSestwbias}, it maintains the geometry of the state space of rigid body 
motions. This is because the exponential map $\exp: \se\to\SE$ is used to update the pose estimate according to eq. \eqref{gkupdate}. This exponential map is numerically evaluated using 
Rodrigues' rotation formula, as given in~\cite{bumu95, bullo2019geometric}.
 
\subsection{DQ-MEKF estimator}
The dual quaternion multiplicative extended Kalman filter (DQ-MEKF) is proposed in \cite{filipe2015extended}. This filter in the absence of bias is given by:
\begin{align}
    &\frac{\di}{\di t} (\hat{\boldsymbol{q}}_{B/I}) \approx \hat{\boldsymbol{q}}^{*}_{B/I}\boldsymbol{q}_{B/I}, \\
    &\hat{\boldsymbol{\omega}}^{B}_{B/I} = \boldsymbol{\omega}^{B}_{B/I,m},
\end{align}
where $\boldsymbol{q}_{B/I}$ is the unit dual quaternion of a body frame with respect to an inertial frame and is represented by 
\begin{align}
    \boldsymbol{q}_{B/I} = q_{B/I,r} + \epsilon q_{B/I,d},
\end{align}
and ${\boldsymbol{\omega}}^{B}_{B/I}$ is the dual velocity of the body frame with respect to the inertial frame expressed in the body frame. The following constraints are enforced to avoid numerical errors in the propagation of $\wh{\boldsymbol{q}}_{B/I}$:
\begin{align}
    [q_{B/I,r}] &= \dfrac{[q_{B/I,r}]}{\norm{[q_{B/I,r}]}} \quad\quad \mbox{and} \; \nn \\
    [q_{B/I,d}] &= \Biggl(I_{4\times 4} - \dfrac{[q_{B/I,r}][q_{B/I,r}]\T}{\norm{[q_{B/I,r}]}^{2}}\Biggl)[q_{B/I,d}].
\end{align}
The first equation above is a normalization of the rotation part of the dual quaternion that forces it to be a unit quaternion, and the second is a projection of $[q_{B/I,d}]$ on the 
orthogonal subspace of $[q_{B/I,r}]$. Together, these steps ensure that the dual quaternion maps to a pose on $\SE$, and they add to the numerical costs of this DQ-MEKF scheme. 
Thereafter, the covariance matrix $P_{6\times 6}$ of the state satisfies the Riccati equation and is propagated as follows 
\begin{align}
    &\dot{P}_{6\times 6}(t) = F_{6\times 6}(t)P_{6\times 6}(t) + P_{6\times 6}(t)F\T_{6\times 6}(t), \nonumber \\
    &\qquad\qquad G_{6\times 6}(t)Q_{6\times 6}(t)G\T_{6\times 6}(t).
\end{align}
The state estimate at time $t_{k}$ is then calculated as
\begin{align}
    &\hat{\boldsymbol{q}}^{+}_{B/I}(t_{k}) = \hat{\boldsymbol{q}}^{-}_{B/I}(t_{k}) \Delta^{\star} \delta\hat{\boldsymbol{q}}_{B/I}(t_{k}),
\end{align}
where
\begin{align}
    &\boldsymbol{q}_{B/I} = q_{B/I,r} + \epsilon q_{B/I,d}, \\
    &\Delta^{\star} \delta\hat{\boldsymbol{q}}_{B/I} = \biggl( \sqrt{1 - \norm{\Delta^{\star} \overline{\delta \hat{q}_{B/I,r}}}^{2}}, \Delta^{\star} \overline{\delta \hat{q}_{B/I,r}}  \biggl) \nonumber \\
    &\quad + \epsilon \Biggl(\dfrac{-\Delta^{\star} \overline{\delta \hat{q}_{B/I,r}}\T \Delta^{\star} \overline{\delta \hat{q}_{B/I,d}}}{\sqrt{1 - \norm{\Delta^{\star} \overline{\delta \hat{q}_{B/I,r}}}^{2}}},  \Delta^{\star} \overline{\delta \hat{q}_{B/I,d}}   \Biggl),
    \label{update1}
\end{align}
when the attitude error between the true and estimated attitude is less than 180 deg. If the attitude error is greater than 180 deg then,
\begin{align}
    &\Delta^{\star} \delta\hat{\boldsymbol{q}}_{B/I} = \Biggl( \dfrac{1}{\sqrt{1 + \norm{\Delta^{\star} \overline{\delta \hat{q}_{B/I,r}}}^{2}}} , 
\dfrac{\Delta^{\star} \overline{\delta \hat{q}_{B/I,r}}}{\sqrt{1 + \norm{\Delta^{\star} \overline{\delta \hat{q}_{B/I,r}}}^{2}}}  \Biggl) \nonumber \\
    &\quad + \epsilon \Biggl(\dfrac{-\Delta^{\star} \overline{\delta \hat{q}_{B/I,r}}\T \Delta^{\star} \overline{\delta \hat{q}_{B/I,d}}}{1 / \sqrt{1 + \norm{\Delta^{\star} \overline{\delta \hat{q}_{B/I,r}}}^{2}}},  \Delta^{\star} \overline{\delta \hat{q}_{B/I,d}}   \Biggl). \label{update2}
\end{align}
Unlike the VPE and FTS-PE, the DQ-MEKF does not maintain the geometry of the state space $\Ta\SE$ and it avoids unwinding, as given in~\cite{bhat2000topological}, by using the discontinuous update law given by 
equations \eqref{update1}-\eqref{update2}.

\section{Results from Numerical Simulations and Experiment}\label{sec:sim}
This section provides numerical simulation results and results from an experiment, for the proposed FTS-PE scheme and compares its numerical performance to the observers in section \ref{sec:comparison}. These results  
demonstrate its performance and the convergence of estimation errors. The proposed FTS pose estimation scheme has been discretized to enable onboard computer
implementation, and the discretization is based on a geometric integration scheme. Geometric variational integration schemes, in contrast to commonly used numerical integration methods 
such as (unstructured) Runge-Kutta, preserve the geometry of the state space, which in this case is the tangent bundle $\Ta\SE$, without requiring any projection or parametrization. 
Furthermore, they maintain energy-momentum properties for (corresponding continuous-time) conservative and dissipative systems. Let $\dot{\omega} = \gamma$ and $\dot{\upsilon} = \eta$, where $\gamma, \eta$ are the right sides 
of \eqref{obs1}, \eqref{obs2} respectively. Let $\Delta t=t_{k+1}-t_k$ be the time step size and subscript $k$ denote a quantity evaluated at time $t_k$. Then, the discretized observer equations are obtained as:
\begin{align}
    &\omega_{k+1} = \omega_{k} + \Delta t\gamma_{k}, \label{dis_obs_1} \\
    &\upsilon_{k+1} = \upsilon_{k} + \Delta t\eta_{k}, \label{dis_obs_2} \\
    &\wh{\xi}_{k} = \xi^{m}_{k} -\Ad{{\wh{\mathrm{g}}}^{-1}_{k}}\varphi_{k}, \label{dis_obs_3} \\
         &{\wh g}_{k+1} = \wh g_{k}\mbox{exp}(\Delta t\wh\xi^{\vee}_{k}), \label{dis_obs_4}
\end{align}
where
\begin{align}\label{dis_gamma}
    \gamma_{k} &= - k_p s_{L_{k}} - k_{\omega}\frac{\Psi_{k}}{\big({\Psi_{k}}\T\Psi_{k}\big)^{1-1/p}} \nn \\
    &-\frac{\alpha_{1} }{\big({s_{L_{k}}}\T s_{L_{k}}\big)^{1-1/p}}H(s_{L_{k}}) w_{L_{k}}, 
\end{align}
\begin{align}\label{dis_eta}
    \eta_{k} &=  \bar{q}^\times \gamma_{k} - k_{p}\; \kappa \;y_{k} - k_{\upsilon}\frac{\Phi_{k}}{\big({\Phi_{k}}\T\Phi_{k}\big)^{1-1/p}}\nn \\
    &- \frac{\alpha_{2} }{\big({y_{k}}\T y_{k}\big)^{1-1/p}}H(y_{k}) v_{y_{k}}, 
\end{align}
and \be \varphi_{k} = \bbm \omega_{k} \\ \upsilon_{k}\ebm. \ee
\begin{proposition}\label{filterprop}
    In the absence of translational velocity measurements, 
    a finite-time stable filter 
    in discrete time is obtained according to \cite{Sanyal2022}, as follows:
    \begin{align}
        &z^{f}_{k+1} = z^{m}_{k} + \cD(c_{k})c_{k}
        + (\cD(z^{f}_{k} - z^{f}_{k-1}))(z^{f}_{k} - z^{f}_{k-1}), \\
        &\dot{z}^{f}_{k} = \dfrac{(z^{f}_{k+1} - z^{f}_{k})}{\Delta t},\\
        &\mbox{where} \; \cD(c_{k}) = \dfrac{(c_{k}\T c_{k})^{1-1/r} - \lambda_{c}}{(c_{k}\T c_{k})^{1-1/r} + \lambda_{c}} \mbox{and} \; c_k=z^f_k- z^m_k,
    \end{align}
    $r \in ]1,2[$ and $\lambda_{c} > 0$ are constants, $z^{f}_{k} = z^{f}(t_k)$ and $z^{m}_{k} = z^{m}(t_k)$, respectively. Thereafter, $z^{m}_{k}$ is replaced with $z^{f}_{k}$ in eq. \eqref{dis_obs_3} of the 
    discrete-time observer.
\end{proposition}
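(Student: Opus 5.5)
The plan is to treat the filter as a discrete-time error system and import the discrete finite-time stability result of \cite{Sanyal2022}, rather than redoing the stability theory from scratch.

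\textbf{Error recursion.} With $c_k=z^f_k-z^m_k$ and $d_k=z^f_k-z^f_{k-1}$, the filter update gives
\begin{align*}
c_{k+1}&=\cD(c_k)c_k+\cD(d_k)d_k-(z^m_{k+1}-z^m_k),\\
d_{k+1}&=c_{k+1}-c_k+(z^m_{k+1}-z^m_k).
\end{align*}
I would first treat the nominal case in which $z^m$ is constant over the window, or sampled fast enough that $z^m_{k+1}-z^m_k=O(\Delta t)$. In that case the pair $(c_k,d_k)$ evolves autonomously, and the equilibrium $(c,d)=(0,0)$ corresponds to $z^f_k=z^m_k$.

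\textbf{Scalar gain $\cD$.} The first step is to record the elementary properties of $\cD$:
\begin{itemize}
\item $\cD(x)\in\,]-1,1[$ for all $x$.
\item $\cD(x)=0$ exactly on the sphere $(x\T x)^{1-1/r}=\lambda_c$.
\item $1+\cD(x)=\dfrac{2(x\T x)^{1-1/r}}{(x\T x)^{1-1/r}+\lambda_c}$, which is of order $(x\T x)^{1-1/r}$ near the origin.
\end{itemize}
The last identity shows that the nonlinear part of the map has the same H\"older structure as the continuous-time finite-time stable feedback $\dot x=-x/(x\T x)^{1-1/r}$, with $r\in]1,2[$. This mirrors the role $p$ plays in Theorem \ref{FTSestwbias}.

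\textbf{Lyapunov decrease.} Next I would take a discrete Lyapunov candidate $V_k=\|c_k\|^2+\|d_k\|^2$, possibly weighted to absorb the cross term. Using the bounds on $\cD$, I would estimate $V_{k+1}-V_k$ and aim for an inequality of the form
$$V_{k+1}-V_k\le-\beta\,V_k^{1/r},\qquad \beta>0,$$
valid outside a small ball. This is the discrete analogue of the inequality \eqref{fts_Lyp} used for the observer. The cross term $\cD(d_k)d_k$ would be handled with Young's inequality. The $V^{1/r}$ power would then be assembled exactly as in the proof of Theorem \ref{FTSestwbias}, by combining separate powers with Lemma \ref{binom}.

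\textbf{Finite-time convergence and extension.} Summing the decrease inequality bounds the number of steps needed to reach the ball, which gives finite-time convergence of $z^f_k$ to $z^m_k$. The finite-difference $\dot z^f_k=(z^f_{k+1}-z^f_k)/\Delta t$ then converges to the true rate, up to $O(\Delta t)$. For time-varying or noisy $z^m$, I would treat the increment $z^m_{k+1}-z^m_k$ as a bounded disturbance, in the same way as Corollary \ref{corr1}, which yields convergence to a bounded neighborhood. Substituting $z^f_k$ into \eqref{dis_obs_3} is then justified as in Proposition \ref{implement-prop}.

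\textbf{Main obstacle.} The hard step is the behavior near the origin, where $\cD\to-1$. There $\|\cD(c)c\|\to\|c\|$, so the map is not obviously contractive, and the sign alternation must be exploited together with the $d_k$ term. My first attempt would be to pass to the two-step map $c_k\mapsto c_{k+2}$, or equivalently to use the variable $c_k+c_{k-1}$, so that the $1+\cD$ factor and its H\"older decay become visible. Failing that, I would invoke the discrete finite-time stability theorem of \cite{Sanyal2022} directly and only verify that this filter matches its hypotheses.
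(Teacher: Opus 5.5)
\emph{Note:} the paper gives no proof of this proposition. It attributes the finite-time stability to \cite{Sanyal2022}, so your fallback is in fact the paper's entire justification.

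Your direct Lyapunov route fails at its central step, and the failure is structural, not a matter of sharper estimates. The cause is the endpoint your first bullet excludes: $\cD(0)=-1$, so $\cD$ ranges over $[-1,1[$. Write $s=\|c\|^{2(1-1/r)}$. The correction $(1+\cD(c))c=\tfrac{2s}{s+\lambda_c}c$ has norm at most $\tfrac{2}{\lambda_c}\|c\|^{3-2/r}=o(\|c\|)$, and it is $C^1$ with zero derivative at $c=0$. Hence, for constant $z^m$, your $(c_k,d_k)$ recursion is $C^1$ at the origin with linearization $c_{k+1}=-c_k-d_k$, $d_{k+1}=-2c_k-d_k$. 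Its eigenvalues are $-1\pm\sqrt{2}$. Since $1+\sqrt{2}>1$, the equilibrium $z^f=z^m$ of the filter as stated is unstable. Consequences:
\begin{itemize}
\item No $V$ can satisfy $V_{k+1}-V_k\le-\beta V_k^{1/r}$ near the equilibrium.
\item Young's inequality cannot repair this.
\item The two-step map only squares the eigenvalues, to $3\pm2\sqrt{2}$, so one of them is still larger than $1$.
\end{itemize}

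Your H\"older analogy also runs the wrong way. In continuous time, finite-time convergence comes from the gain $\|x\|^{-2(1-1/r)}$ blowing up at the origin. Here $\cD=(1-a)/(1+a)$ with $a=\lambda_c\|x\|^{-2(1-1/r)}$. This is a Cayley (trapezoidal-type) discretization of that feedback with the gain frozen at $x_k$, and it tends to $-1$ as $a\to\infty$. So $1+\cD$ measures how far the map departs from the isometry $c\mapsto -c$, and that departure vanishes at the origin. Even if you delete the $\cD(d_k)d_k$ term, $c_{k+1}=\cD(c_k)c_k$ gives
\[
\|c_{k+1}\|^2-\|c_k\|^2=-\frac{4\lambda_c s_k}{(s_k+\lambda_c)^2}\,\|c_k\|^2\approx-\frac{4}{\lambda_c}\|c_k\|^{2(2-1/r)}.
\]
This decrease has exponent $2-1/r>1$, which means only slow algebraic convergence. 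Moreover, if $0<\|c_0\|^{2(1-1/r)}<\lambda_c$, then $\cD(c_k)\in\,]-1,0[$ for all $k$, so $c_k$ never reaches $0$.

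More generally, your target inequality forces $V_{k+1}=0$ as soon as $V_k\le\beta^{r/(r-1)}$. That would require the map to send nonzero states exactly onto the origin. A map that is $C^1$ with invertible derivative at its fixed point cannot do this, because the inverse function theorem makes it locally injective. So ``verifying the hypotheses of \cite{Sanyal2022}'' is not a formality: for the recursion exactly as transcribed, it cannot succeed. A valid argument would need a different update law, or a corrected statement of the cited one, rather than a better estimate.
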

The matrix exponential map in \eqref{dis_obs_4} ensures that the pose at each instant is in $\SE$ provided the initial estimate is in $\SE$.
The proposed estimation scheme is simulated with a time step size of $\Delta t = 0.1\mbox{s}$ for a duration of $T = 30s$. For numerical simulations, the initial attitude and position of the rigid body are selected as:
\begin{equation}
    R_{0} = I,\; b_{0} = [0 \; 0 \; 0]\T \mbox{m}.
\end{equation}
The initial angular and translational velocities of the rigid body, respectively, are selected as:
\begin{equation}
    \Omega_{0} = [0 \; 0.15 \; 0]\T \;\;\mbox{rad/s} \;\;\mbox{and}\; \nu_{0} = [0.65 \; 0 \; 0.1]\T \;\mbox{m/s}.
\end{equation}
We assume that point cloud measurements are obtained from a vision sensor for points whose inertial positions are known. In addition, angular velocities measurements are obtained from rate gyros. The filtered position measurements and 
velocities are then obtained using the FTS filter given by Propositions \ref{implement-prop} and \ref{filterprop}.

The initial state estimates are selected to be:
\begin{align}\label{Rhat0}
    &\wh{R}_{0} = \mbox{exp}_{\SO} \left(0.9\pi \left([1\;0\;0]\T \right)^{\times}\right), \;\; \wh{b}_{0} = [1.5\;1\;1]\T, \;\; \nonumber \\
    &\wh{\Omega}_{0} = [-0.67 \; -0.25 \; -0.09]\T \;\;\mbox{rad/s}
    \;\;\mbox{and}\; \nonumber \\ &\wh{\nu}_{0} = [0.76 \; -2.63 \; 2.83]\T \;\mbox{m/s}.
\end{align}
The observer gains selected are $k_p = 10.1, \; k_{\upsilon} = 10.02, \; k_{\omega} = 11.01, \; p = 13/11, \; \kappa = 1.1, \; \alpha_{1} = 88.65, \; \alpha_{2} = 0.9609$. The proposed FTS pose estimation scheme is simulated for the following two different cases.
\subsection{Simulation results of FTS pose estimator in the absence and presence of measurement noise in angular and translational velocities}\label{8_1}
This subsection presents the simulation results of the FTS-PE scheme in the absence and presence of measurement noise in angular and translational velocities, with initial estimation errors. First, 
the performance of the FTS-PE scheme is evaluated in the absence of measurement noise. The true and estimated trajectories over the time interval are depicted in Fig. \ref{fig:3d_traj}. The attitude estimation error is parameterized by the principal 
rotation angle $\phi$ of the attitude estimation error matrix $Q$ as $\phi = \textup{cos}^{-1} (\frac{1}{2}(\textup{tr}(Q)-1))$. The principal angle $\phi$ and position estimation error $\chi$ are displayed in Fig. \ref{fig:Q_chi}, which shows 
the finite-time convergence of the estimation errors for both attitude and position. 


\begin{figure}[h]
    \centering
    \includegraphics[width=0.95\columnwidth]{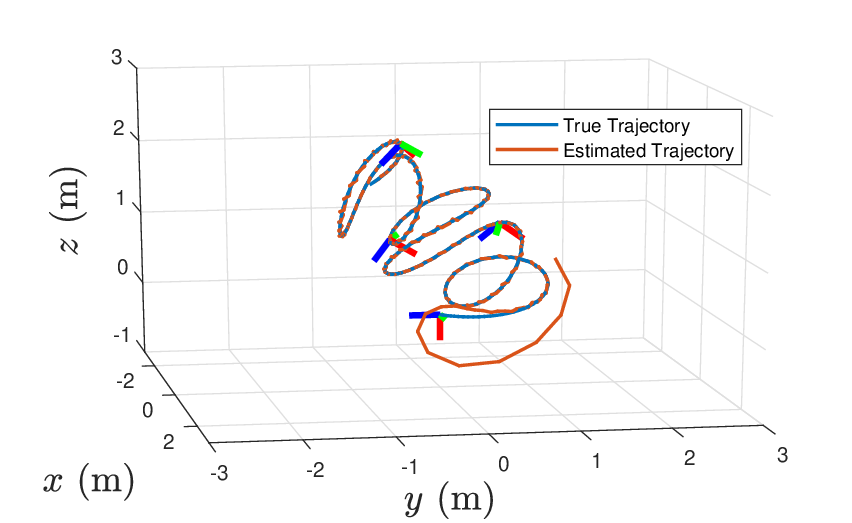}
    \caption{True and Estimated Trajectories.}
    \label{fig:3d_traj}
\end{figure}

\begin{figure}[h]
    \includegraphics[width=\columnwidth]{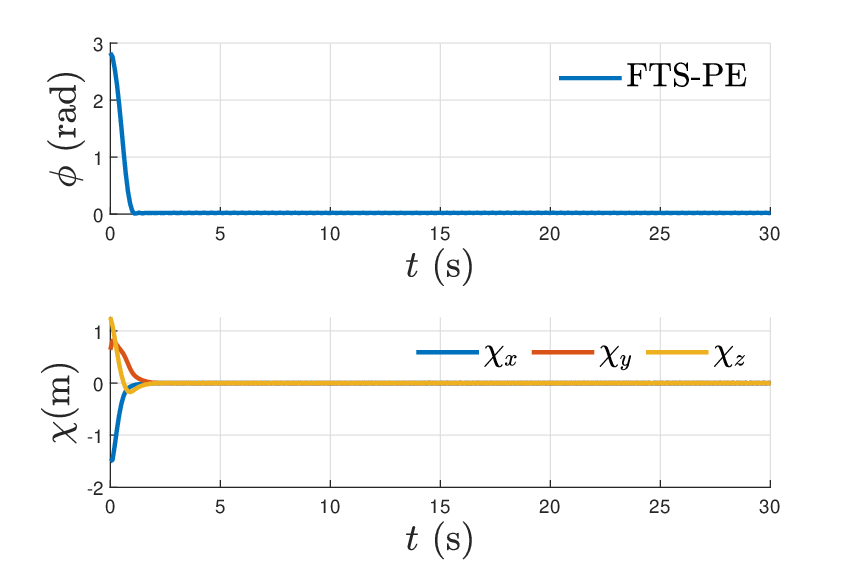}
    \caption{Plots of attitude and position estimation errors in the absence of measurement noise.}
    \label{fig:Q_chi}
\end{figure}
%
Next, the effects of measurement noise in angular and translational velocities on the performance of the FTS-PE scheme are considered. We assume that both angular and translational velocity 
measurements are corrupted by Gaussian noise with zero mean and standard deviation of $9.1673\;^{o}\mbox{/s}$ and $0.02\;\mbox{m/s}$, respectively. 
The angular and translational velocity estimation errors in the presence of measurement noise in velocities are shown in Fig. \ref{fig:varphi_rv}, while Fig. \ref{fig:Q_chi_rv} displays the principal angle of the attitude estimation error 
$\phi$ and the position estimation error $\chi$. The simulation results depicted in these figures demonstrate the stability of the proposed FTS-PE scheme and the convergence of estimation errors to 
a small, bounded neighborhood of $(h, \varphi) = (I, 0)$ despite persistent measurement noise in velocities at these relatively high noise levels. The size of this neighborhood depends on the bounds 
of the measurement noise and the observer gains selected, according to Corollary \ref{corr1}. In Fig. \ref{fig:zeta_rv}, the plots show the bounded values of $\zeta_{\omega}$ and $\zeta_{\upsilon}$, 
along with the bounds determined by inequality \eqref{robustness_ineq} in Corollary \ref{corr1}.
%
%
\begin{figure}[h]
    \includegraphics[width=\columnwidth]{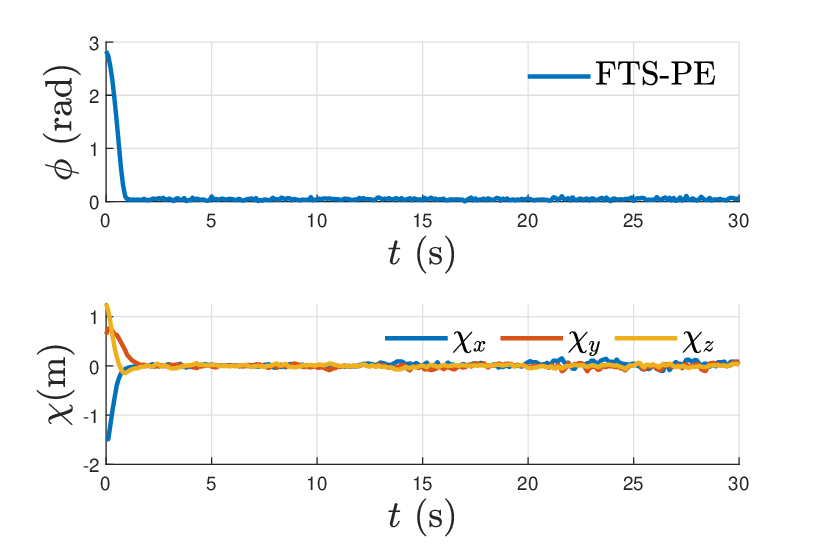}
    \caption{Plots of attitude and position estimation errors in the presence of measurement noise in velocities.} 
    \label{fig:Q_chi_rv}
\end{figure}
\begin{figure}[h]
    \includegraphics[width=\columnwidth]{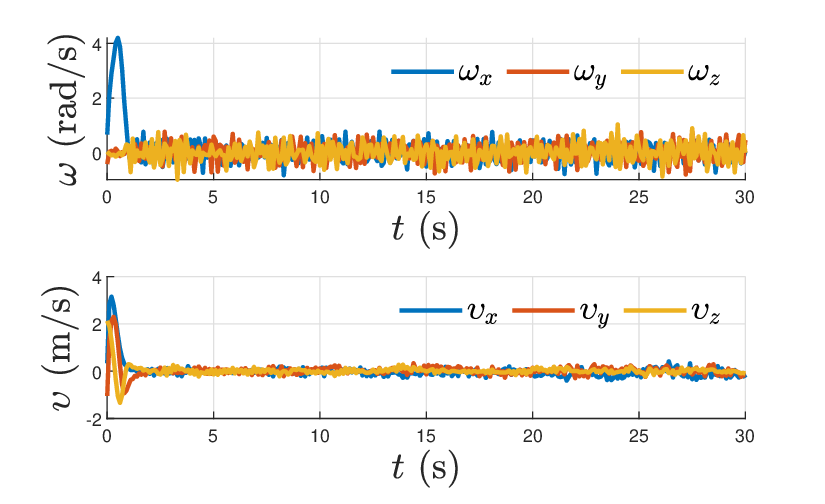}
    \caption{Plots of estimation errors in velocities in the presence of measurement noise in velocities.}   
    \label{fig:varphi_rv}
\end{figure}
\begin{figure}[ht]
 \includegraphics[width=\columnwidth]{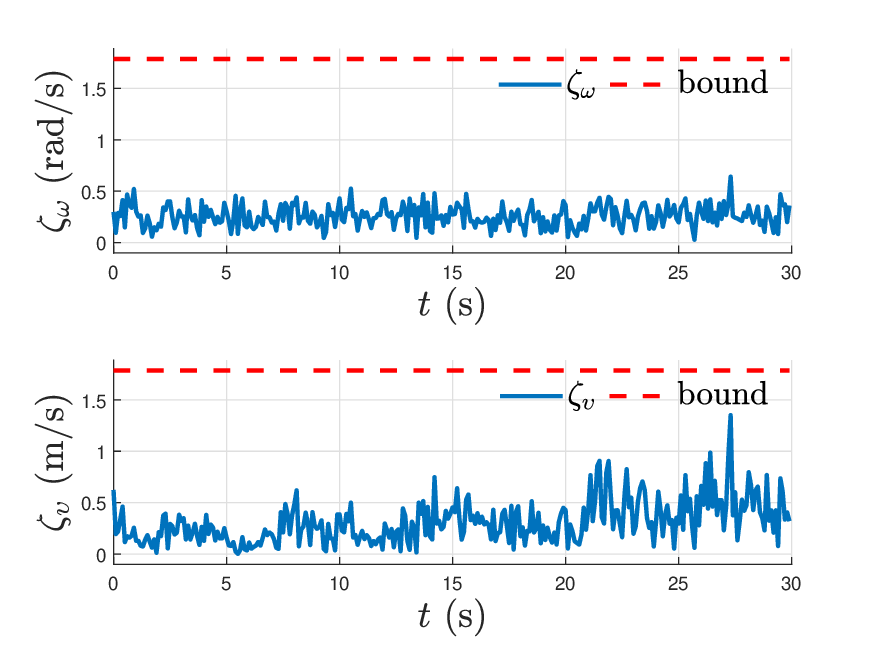}
    \caption{Plots of bounded measurement noise in velocities.}   
    \label{fig:zeta_rv}
\end{figure}

\subsection{Comparison of FTS-PE with other pose estimators in the presence of measurement noise in all states}
Here, the effects of vector measurement errors are considered in addition to measurement noise in angular and translational velocities. The performance of the FTS-PE 
is compared to the two pose estimators outlined in section \ref{sec:comparison}, by utilizing the same initial estimated states used in subsection \ref{8_1}. The time step size for this simulation 
is $\Delta t = 0.1\mbox{s}$. The vision (point cloud) measurements are obtained at this constant rate by body-fixed sensors with an additive uniform random noise with zero mean and 
standard deviation of $0.15\mbox{m}$. The noise in angular and translational velocity measurements is similar to that considered in subsection \ref{8_1}. 
For a fair comparison, the same measurements are used for all three pose estimation schemes. The tuned gain parameters for both DQ-MEKF and VPE are unchanged and are as given in \cite{filipe2015extended} and \cite{izadi2016rigid}, respectively.
Fig. \ref{fig:Q_chi_noise} plots the position estimation error (bottom) and the principal angle of the attitude estimation error (top). The results from the simulations presented in these 
plots support the guaranteed stable and robust performance of FTS-PE and VPE in the presence of noisy measurements, unlike DQMEKF. Moreover, FTS-PE shows convergence of estimation errors to the 
smallest neighborhood of $(h, \varphi) = (I, 0)$ in finite time.

The simulation run-times for all three pose estimators corresponding to the results in Fig. \ref{fig:Q_chi_noise}, are given in Table \ref{sim_time}. From this table, one can see that 
the FTS-PE executes the fastest of all three pose estimators compared. 
In addition, the root-mean-square (RMS) values of attitude and position estimation errors over the simulated duration of 30s for these three pose estimators are given in Table \ref{avg_error}. 
These RMS values show that the pose estimation errors for the FTS-PE are the lowest. These pose estimators were implemented on a computer with a 3.30 GHz, AMD Ryzen 9 5900HS CPU, and 16 GB of RAM.

\begin{table}[htb]
\centering
\begin{tabular}{|c|c|c|c|} 
 \hline
 Estimator & FTS-PE & DQMEKF & VPE\\ 
 \hline
 Run-times & 0.0469 s & 0.1250 s & 0.1094 s\\ 
 \hline
\end{tabular}
\caption{Simulation times for FTS-PE, DQMEKF, and VPE.}
\label{sim_time}
\end{table}
\begin{table}[htb]
\centering
\begin{tabular}{|c|c|c|c|} 
 \hline
 Estimator & FTS-PE & DQMEKF & VPE\\ 
 \hline
 Attitude error (rad) & 0.3544 & 0.6806 & 0.5780\\ 
 \hline
 Position error (m) & 0.2769 & 0.4336 & 0.4102\\ 
 \hline
\end{tabular}
 \vspace{2mm}
\caption{RMS values of attitude and position estimation errors for FTS-PE, DQMEKF, and VPE.}
\label{avg_error}
\end{table}
%
%

\begin{figure}[h]
    \includegraphics[width=\columnwidth]{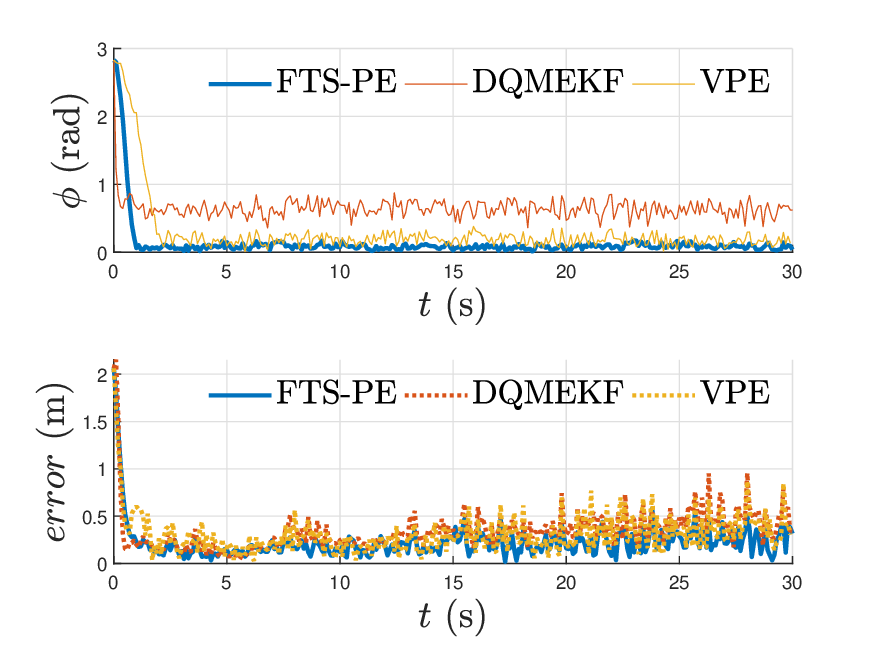}
    \caption{Plots of attitude and position estimation errors in the presence of measurement noise in all states.} 
    \label{fig:Q_chi_noise}
\end{figure}
\subsection{Experimental results for FTS pose estimator using ZED 2i sensor}
This subsection presents results obtained from an experimental evaluation of the (discretized) finite-time stable pose estimator (FTS-PE) given by eqs. \eqref{dis_obs_3}-\eqref{dis_eta}, using a depth camera sensor. 
For the experiment, the ZED 2i sensor, an advanced commercially available stereo depth camera, was used. The FTS-PE provides estimates of the pose and velocities directly 
from three-dimensional point cloud measurements and angular velocity measurements obtained by this sensor. The vector measurements for FTS-PE are sets of points in the 3D point clouds captured by the 
stereo camera. The 3D spatial data, in the form of the captured point clouds, are stored in ROS bag files and accessed by extracting point cloud messages from the relevant ROS topic. Furthermore, 
the ROS topic "$'/zed2i/zed\_node/imu/data$" publishes messages of type “$sensor\_msgs/Imu$”, which contain fused inertial vector data generated from the ZED 2i's inbuilt Inertial Measurement Unit 
(IMU). The angular velocity of the rigid body is extracted from this IMU data. 

As angular velocity and point cloud measurements are available, the translational velocity is obtained by filtering the measured points and their velocities using eq. \eqref{v_f}. Furthermore, the filtered mean of 
the measured feature points is obtained from eq. \eqref{barafdef} in Proposition \ref{implement-prop}, and the discretized expressions for the continuous-time FTS filter given by Proposition \ref{filterprop} are used for this implementation.
The FTS-PE is implemented with a time step size of $\Delta t = 0.0702$ s for a duration of $T = 30.82$ s, and $\lambda_{c} = 1$. This corresponds to a point cloud sampling frequency of 14.25 Hz for the ZED 2i stereo camera.


For this experiment, the initial attitude and position estimates were set equal to the values given in \eqref{Rhat0}. The estimated values of initial angular and translational velocities were respectively
\begin{align}\label{Vhat0}
    &\wh{\Omega}_{0} = [ -0.7500\;  -0.4910\; -0.1070]\T \;\;\mbox{rad/s}
    \;\;\mbox{and}\; \nonumber \\ &\wh{\nu}_{0} = [ -0.9501\; -1.2812 \;3.0536 ]\T \;\mbox{m/s}.
\end{align}
The first 3D point cloud image obtained by the sensor was considered to define the inertial reference frame. Pose and velocities were estimated from subsequent point cloud image frames with reference to the first frame as 
the inertial frame. This is also the case for  pose estimates provided by the software (SDK) of the sensor.


Fig  \ref{fig:pose estimation errors experimental}. 
displays the principal rotation angle of the attitude estimation error and the norm of the position vector estimation error.
The angular and translational velocity estimation errors obtained from this
experimental evaluation of the FTS-PE scheme are shown in Fig.  \ref{fig:velocity estimation errors experimental}.
\begin{figure}[h]
     \centering
   \includegraphics[width=\columnwidth]{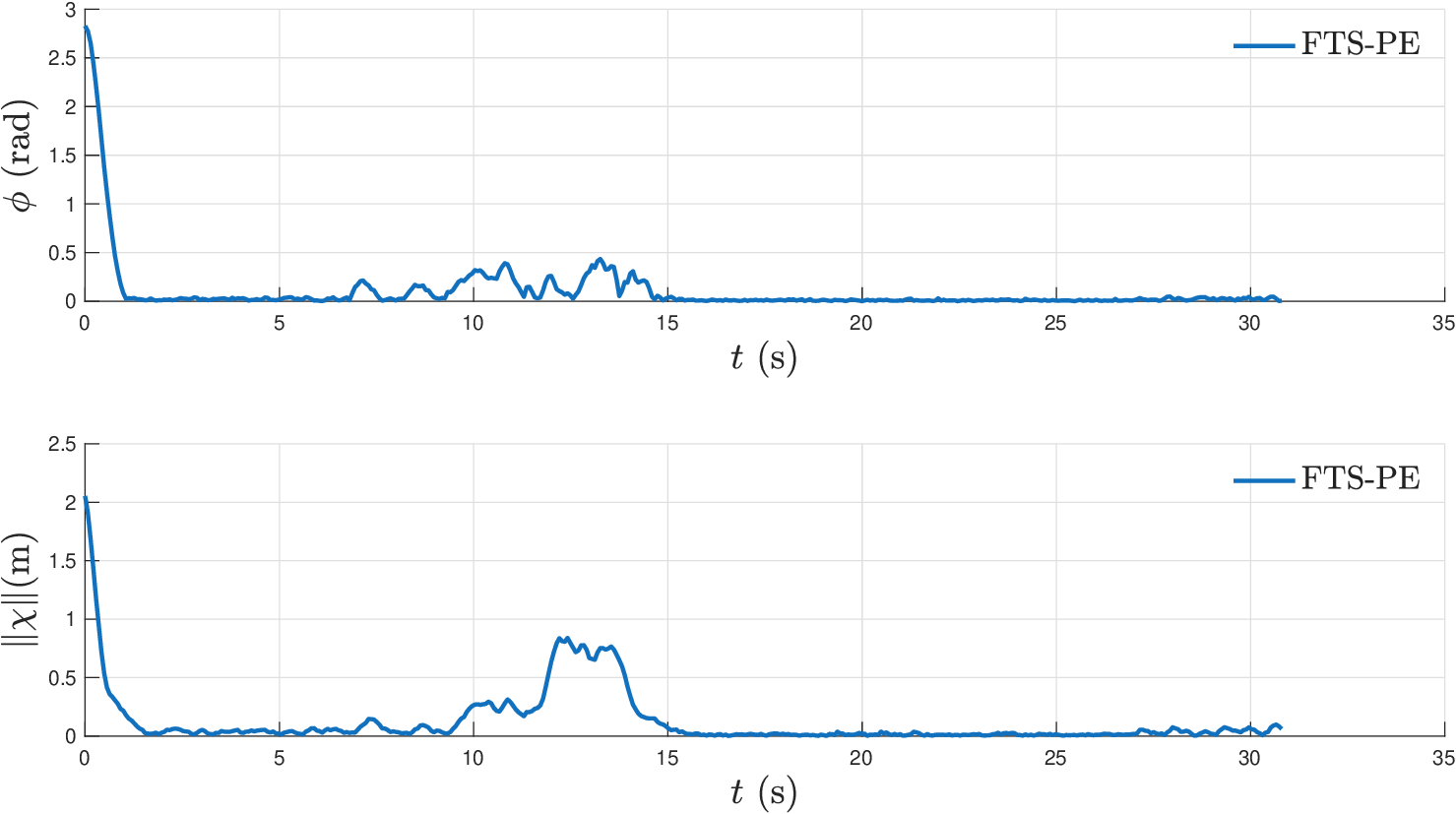}
    \caption{Plots of attitude estimation error and norm of position vector estimation error obtained from the experimental evaluation of the FTS-PE scheme.} 
    \label{fig:pose estimation errors experimental}
\end{figure}
\begin{figure}[h]
   \includegraphics[width=\columnwidth]{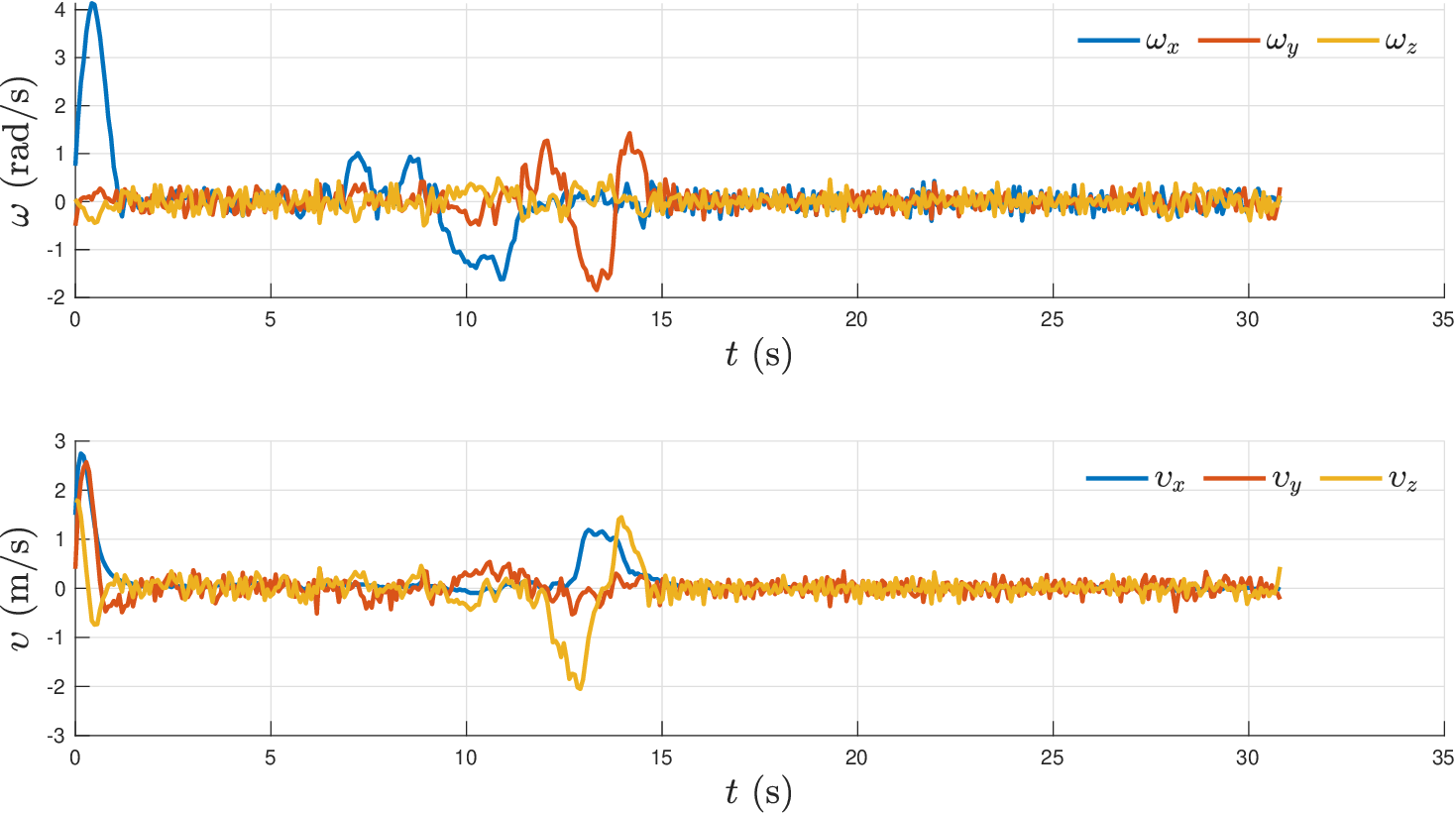}
    \caption{Plots of the angular and translational velocity estimation errors obtained from the experimental evaluation of the FTS-PE scheme.} 
    \label{fig:velocity estimation errors experimental}
\end{figure}
In Fig. \ref{fig:velocity estimation errors experimental}, estimation errors in the angular and translational velocity obtained by the FTS-PE from the Zed2i sensor's point cloud and angular velocity measurements, are seen to converge to small neighborhoods of 
zero errors in a short period of time. A large change in attitude and relatively smaller change in position of the sensor is made between 7 to 14 s after data gathering starts, which leads to sudden changes in the estimation errors. However, these 
errors subsequently decrease within 2 s to small neighborhoods of zero errors. This qualitative feature is seen also in Fig. \ref{fig:pose estimation errors experimental}, which shows that the norm of the position estimation error, 
starting from non-zero initial conditions, converges to nearly zero within short time periods initially, and after a large change in motion is made around 7 s into the data gathering 
period. This is also demonstrated in the plot of the attitude estimation error, parameterized by the principal rotation angle of the attitude estimation error matrix.

\section{Conclusion}\label{sec: conc}
\vspace*{-1mm}
This paper presents theoretical, computational and experimental results for a novel nonlinear finite-time stable pose estimation scheme for rigid bodies. This scheme utilizes a set of vector measurements in 
the form of point clouds given by an onboard sensor along with angular velocity measurements from an inertial measurement unit, to estimate pose, angular and translational velocities. It can also be 
implemented with additional sensors to mitigate the effects of measurement noise and sensor failure through sensor redundancy. The FTS pose estimator is H\"{o}lder-continuous and model-free, and is designed directly on the 
Lie group of rigid body motions, $\SE$. It is shown that the proposed estimator is almost globally finite-time stable from almost all initial conditions except those in a set of 
zero measure on $\Ta\SE$. Finite-time stability guarantees a faster convergence of estimation errors in finite time, as well as additional robustness to measurement noise compared to 
asymptotically or exponentially stable schemes. This work analyzes both finite-time stability and robustness of the FTS pose estimator by using an energy-like Morse-Lyapunov function on $\Ta\SE$, 
which encapsulates the energy contained in the state estimation errors. The estimation scheme is discretized using a geometric variational integration scheme that preserves the geometry of the state space. Results from numerical 
simulations and an experiment, validate the performance of the proposed estimator by demonstrating its finite-time convergence and robustness properties. Moreover, the behavior of this estimation scheme is compared with two state-of-the-art filters for pose estimation in 
numerical simulations. These simulations show that the FTS-PE and VPE, unlike the DQMEKF filter, are stable and effective in filtering out noise.  It is shown that our proposed FTS-PE scheme achieves 
finite-time stable convergence of estimation errors to $(I,0)$, and in the presence of measurement noise, it converges to a bounded neighborhood of $(I,0)$ in finite time. Future 
research will consider directly designing a FTS pose estimation scheme in discrete time from multi-rate measurements obtained from point cloud and inertial sensor measurements. In addition, we plan to compare the performance of the FTS-PE 
with a few other pose estimation schemes in experiments.
\begin{ack}                               
The authors acknowledge support from the National Science Foundation awards 2132799 and 2343062. They also acknowledge Curtis Cline, a senior undergraduate student who helped with the experiments conducted using the Zed 2i sensor. 
\end{ack}

\bibliographystyle{NameModel}
\bibliography{ref_automatica}           



\end{document}